\newcommand*{\cB}{\mathcal{B}}
\newcommand*{\cF}{\mathcal{F}}
\newcommand*{\cH}{\mathcal{H}}
\newcommand*{\cT}{\mathcal{T}}
\newcommand*{\cV}{\mathcal{V}}
\newtheorem{theorem}{Theorem}
\newtheorem{lemma}[theorem]{Lemma}
\newtheorem{corollary}[theorem]{Corollary}
\newcommand{\ket}[1]{|#1\rangle}
\newcommand*{\Sp}{\mathsf{Sp}}
\newcommand*{\ad}{\mathsf{Ad}}
\newcommand*{\xorig}{x^{\mathsf{or}}}
\newcommand*{\Sorig}{S^{\mathsf{or}}}
\newcommand*{\Xorig}{X^{\mathsf{or}}}
\newcommand*{\Uorig}{U^{\mathsf{or}}}
\newcommand*{\xres}{x^{\mathsf{res}}}
\newcommand*{\Sres}{S^{\mathsf{res}}}
\newcommand*{\Ures}{U^{\mathsf{res}}}
\newcommand*{\xcontrol}{x^{\mathsf{ctr}}}
\newcommand*{\Scontrol}{S^{\mathsf{ctr}}}
\newcommand*{\xtf}{x^{\mathsf{tf}}}
\newcommand*{\Xtf}{X^{\mathsf{tf}}}
\newcommand*{\Stf}{S^{\mathsf{tf}}}
\newcommand*{\Kres}{K^{\mathsf{res}}}
\newcommand*{\yres}{y^{\mathsf{res}}}
\newcommand*{\Mres}{M^{\mathsf{res}}}
\newcommand*{\phires}{\varphi^{\mathsf{res}}}
\newcommand*{\rhores}{\rho^{\mathsf{res}}}
\newcommand*{\zetares}{\zeta^{\mathsf{res}}}
\newcommand*{\Htf}{H^{\mathsf{tf}}}
\newcommand*{\Horig}{H^{\mathsf{or}}}
\newcommand*{\tr}{\mathsf{tr}}
\begin{document}


\title{Universal Uhrig dynamical decoupling for bosonic systems}

\author{Margret Heinze}
\affiliation{Zentrum Mathematik, Technische Universit\"at M\"unchen, 85748 Garching, Germany}
\affiliation{Max-Planck-Institut f\"ur Quantenoptik, 85748 Garching, Germany}
\author{Robert K\"onig}
\affiliation{Zentrum Mathematik, Technische Universit\"at M\"unchen, 85748 Garching, Germany}
\affiliation{Institute for Advanced Study, Technische Universit\"at M\"unchen, 85748 Garching, Germany}

\begin{abstract}
		We construct efficient deterministic dynamical decoupling schemes protecting  continuous variable degrees of freedom. Our schemes target  decoherence induced by quadratic system-bath interactions with analytic time-dependence.
		We show how to suppress such interactions to $N$-th order using only $N$~pulses.  Furthermore, we show to homogenize a $2^m$-mode bosonic system using only $(N+1)^{2m+1}$ pulses, yielding -- up to $N$-th order -- an effective evolution described by non-interacting harmonic oscillators with identical frequencies. The decoupled and homogenized system provides natural decoherence-free subspaces for encoding quantum information. Our schemes only require  pulses  which are tensor products of single-mode passive Gaussian unitaries and SWAP gates between pairs of modes. 
\end{abstract}

\maketitle

Decoherence due to unwanted system-environment interactions is 
a major obstacle on the road towards robust quantum information processing. Although quantum error correction and fault-tolerance provide general  mechanisms to combat such sources of error they are highly demanding in terms of resources. In near-term quantum devices, simpler strategies targeting a reduction of effective error rates at the physical level are more realistic. Dynamical decoupling (DD) is one of the success stories in this direction:  originally developed in the context of NMR~\cite{haeberlenwaugh,waughetal68,hahn50}, it 
has  been  demonstrated in a wide range of systems~\cite{jenista,alvarez10,delange60,du09,wangetal10,bartheletal10,biercuk09}.
In this open-loop control technique, unitary control pulses are instantaneously applied to the system at specific times. The goal is to average out the effect of the system-environment interaction, irrespective of its specific form, approximately 
resulting in a product evolution that acts trivially on the system.

A DD scheme is described by control pulses~$\{U_j\}_{j=1}^L$ applied to the system at times~$\{t_j\}_{j=1}^L\subset [0,T]$ resulting in the evolution
\begin{align}\label{eq:resultingunitary}
	\Ures(T)=U(T,t_L)\prod_{j=1}^L (U_j\otimes I_E) U(t_j,t_{j-1})\ .
\end{align}Here the time evolution $U(t+\Delta t,t)$ from time $t$ to $t+\Delta t$ is generated by the Hamiltonian describing decoherence processes. The pulse sequence achieves $N$-th order decoupling if there is an environment unitary $U_E$ such that $\|\Ures(T)-I_S\otimes U_E\|=O(T^{N+1})$.
For multi-qubit systems, the (nested) Uhrig dynamical decoupling (nested UDD or NUDD) scheme~\cite{mukhtar10,wangliu11,jiang} is the current state of the art: it is remarkably efficient, requiring only~$(N+1)^{2M}$ Pauli pulses to suppress generic interactions between~$M$ qubits and their environment to order~$N$. 
This scaling is significantly more efficient than what could be achieved, e.g., by concatenating~\cite{kholidar} earlier (first order) schemes~\cite{violaknilllloyd99} based on pulses from a unitary~$1$-design applied at equidistant times.
NUDD has been experimentally demonstrated~\cite{singh2017experimental} for three nesting levels.\\

\paragraph*{DD for bosonic systems. }Motivated by the success of DD for qudit systems, one may seek to construct similar protocols for bosonic systems. 
A natural class of system-environment interactions 
are Hamiltonians
\begin{equation}\label{eq:Hamiltonian}
	\Horig(t)=\frac{1}{2}\sum_{j,k=1}^{2n}A_{j,k}(t)R_j R_k
\end{equation} which 
are quadratic in the mode operators
$R=(Q_1^S,\ldots,Q_{n_S}^S,P_1^S,\ldots,Q_{n_S}^S,Q_1^E,\ldots,Q_{n_E}^E,P_1^E,\ldots,P_{n_E}^E)$ of the system and environment; here $A(t)=A(t)^T\in\mathbb{R}^{2n\times2n}$ is symmetric and $n=n_S+n_E$ is the total number of modes~\footnote{In Appendix~\ref{app:linearterms}, we argue that our results extend to the case where the Hamiltonian includes linear terms in the mode operators. }. Such a Hamiltonian generates a one-parameter group of Gaussian unitaries~$U(t)$. Motivated by earlier work giving an example of decoherence suppression in a specific system-environment model~\cite{vitali99}, Arenz, Burgarth and Hillier~\cite{arenz17} have pioneered the systematic study of dynamical decoupling for infinite-dimensional systems. They showed that even in this restricted context, decoupling by application of unitary pulses at specific times cannot be achieved in the same strong sense as for qudit systems: 
While the system's evolution can be (approximately) decoupled from the environment such that $\Ures(T)\approx \Ures_S(T)\otimes \Ures_E(T)$,  no such scheme can render the system's evolution~$\Ures_S(T)$ trivial for an arbitrary initial Hamiltonian~\eqref{eq:Hamiltonian}.
One may, however, find pulse sequences which simplify the system's evolution over time~$T$ to be of the form $\Ures_S(T)\approx e^{i\omega T H_0}$
where $H_0=\frac{1}{2}\sum_{j=1}^{n_S} (Q_j^2+P_j^2)$, a process referred to as {\em homogenization}.  In other words, after applying a homogenization sequence, the effective  decoupled and homogenized evolution 
\begin{align}
\Ures(T)\approx e^{i\omega T H_0}\otimes \Ures_E(T)\label{eq:decoupledhomogenized}
\end{align}
is simply that of identical oscillators (rotating with the same frequency) which do not interact with each other or with the environment.
We remark that an evolution of the form~\eqref{eq:decoupledhomogenized} is still highly beneficial for fault-tolerant quantum information processing as the eigenspaces of the number operator~$H_0$ are now decoherence free~\footnote{The degeneracy of the energy~$E$-eigenspaces of~$H_0$ grows polynomially with~$E$ with a degree determined by the number of modes~$n_S$. }. This provides a way of achieving reduced logical error rates by combining decoupling and homogenization schemes with very simple error-correcting codes spanned by tensor products of number states with fixed total number, such as those constructed in~\cite{ouyang18}.

Here we construct new deterministic schemes that achieve decoupling and homogenization of quadratic system-bath interactions to $N$-th order (for any integer $N$) using only a polynomial number (in $N$) of pulses. 

Our analysis proceeds in the language of the symplectic group~$\mathsf{Sp}(2n)=\{S\in \mathbb{R}^{2n\times2n} | SJ_nS^T=J_n\}$ and its Lie algebra~$\mathfrak{sp}(2n)=\{X\in \mathbb{R}^{2n\times2n} | XJ_n+X^TJ_n=0\}$.
Every Hamiltonian~\eqref{eq:Hamiltonian} can be associated with a symplectic generator $\Xorig(t)=A(t)J_n\in\mathfrak{sp}(2n)$; the former  generates a one-parameter group of Gaussian unitaries~$\Uorig(t)$ which can again be associated with the one-parameter group of symplectic matrices $\Sorig(t)\in\mathsf{Sp}(2n)$ generated by~$\Xorig(t)$.
Instead of~\eqref{eq:resultingunitary} we analyze the resulting symplectic evolution
\begin{align}
	\Sres(T)=S(T,t_L)\prod_{j=1}^L(S_j\oplus I_{2n_E}) \Sorig(t_j,t_{j-1})\  \label{eq:Sres}
\end{align} 
where $\Sorig(t+\Delta t,t)$ is generated by $\Xorig(t)$ from time $t$ to $t+\Delta t$ and the pulses $S_j\in \mathsf{Sp}(2n_S)$ are associated with Gaussian unitary pulses.\\

\paragraph*{A bosonic decoupling scheme.}
We propose the following pulse sequence: the Gaussian unitary $U_S$ defined by its action 
\begin{equation}
U_SQ_iU^*_S= -Q_i\ ,\ U_SP_iU^*_S= -P_i , \textrm{ for } i=1,\ldots,n_S \label{eq:decouplingunitary}
\end{equation}on system mode operators is applied at times
\begin{equation}\label{eq:Uhrigtimes}
t^\textsf{UDD}_j=T\Delta_j\ ,\quad \Delta_j=\sin^2 \frac{j \pi }{2(N+1)}
\end{equation} for $j=1,\ldots,N$. We note that the passive Gaussian unitary~$U_S$ is a tensor product of single-mode phase flips.
\begin{theorem}[Bosonic decoupling sequence] \label{thm:decoupling}
	For any analytic generator $\Xorig:[0,T]\rightarrow\mathfrak{sp}(2(n_S+n_E))$, 
	there are $S_S\in \mathbb{R}^{2n_S\times 2n_S}$ and $S_E\in\mathbb{R}^{2n_E\times 2n_E}$ such that 
	the resulting evolution~\eqref{eq:Sres} after applying~$N$ pulses satisfies
	\begin{equation}\label{eq:symplecticdec_condition}
	\|\Sres(T) -S_S\oplus S_E\|=O(T^{N+1})\ .
	\end{equation}
\end{theorem}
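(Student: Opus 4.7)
Let $P = -I_{2n_S} \oplus I_{2n_E}$ denote the symplectic matrix corresponding to the pulse $U_S$ of~\eqref{eq:decouplingunitary} (trivial on the environment). Since $P^2 = I$ and conjugation by $P$ fixes block-diagonal elements of $\mathfrak{sp}(2(n_S+n_E))$ while negating block-off-diagonal ones (with respect to the system/environment splitting), I decompose $\Xorig(t) = X_D(t) + X_O(t)$ into its block-diagonal and block-off-diagonal parts, both analytic. Passing to the toggling frame via $Y(t) = P^{-k(t)} \Sres(t)$, where $k(t)$ counts the pulses fired in $[0,t]$, a direct calculation shows that $Y$ is continuous across pulse times and satisfies $\dot Y(t) = \tilde X(t) Y(t)$ with
\begin{equation}
\tilde X(t) = X_D(t) + y(t) X_O(t), \qquad y(t) = (-1)^{k(t)} \in \{\pm 1\},
\end{equation}
while $\Sres(T) = P^N Y(T)$. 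Since $P^N$ is itself block-diagonal, the theorem reduces to showing that $Y(T)$ is block-diagonal up to $O(T^{N+1})$.

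To analyze $Y(T)$ I further remove the $X_D$-dynamics: write $Y(t) = S_D(t) W(t)$ with $\dot S_D = X_D(t) S_D$ and $S_D(0) = I$. Then $S_D$ is block-diagonal and $\tilde X_O(t) = S_D(t)^{-1} X_O(t) S_D(t)$ is block-off-diagonal and analytic, so $W$ is driven by $y(t) \tilde X_O(t)$. Dyson-expanding $W(T)$ produces a sum of $k$-fold time-ordered integrals $\int y(t_k) \cdots y(t_1) \tilde X_O(t_k) \cdots \tilde X_O(t_1) \, dt$ over the simplex $\{0 \leq t_1 \leq \cdots \leq t_k \leq T\}$. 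Products of an odd (resp.\ even) number of block-off-diagonal matrices are block-off-diagonal (resp.\ block-diagonal), so the block-off-diagonal error in $Y(T)$ is exactly the sum of odd-$k$ contributions, while the even-$k$ terms combine with $S_D(T)$ to form the block-diagonal target $S_S \oplus S_E$.

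The main obstacle is bounding each odd-$k$ integral by $O(T^{N+1})$. For $k > N$ the simplex volume gives this automatically; for $k = 1$ the defining Uhrig moment conditions $\int_0^T y(t) t^m \, dt = 0$ for $m = 0, \ldots, N-1$, combined with the Taylor expansion of $\tilde X_O$, suffice. The delicate case is odd $k$ with $3 \leq k \leq N$, where the naive bound yields only $O(T^k)$ and genuine cancellation among the $y$-factors is needed. I would expand $\tilde X_O$ in a multivariate Taylor series in its $k$ arguments, reducing each contribution to an integral of $y(t_1) \cdots y(t_k) \, t_1^{m_1} \cdots t_k^{m_k}$ over the simplex, and then use the Chebyshev-extremal origin of the Uhrig times~\eqref{eq:Uhrigtimes} to prove the required multivariate vanishing identities whenever $m_1 + \cdots + m_k + k \leq N$. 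Establishing this multivariate strengthening of the one-variable Uhrig identity is the technical core of the proof.
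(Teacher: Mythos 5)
Your architecture is essentially the paper's: you pass to the toggling frame so that the off-diagonal (system--environment) block of the generator acquires the sign function $y(t)=(-1)^{k(t)}$, Dyson-expand, observe that only products with an odd number of off-diagonal factors contribute to the $SE$/$ES$ blocks, and reduce the theorem to the vanishing of simplex integrals of $y(t_1)\cdots y(t_k)\,t_1^{m_1}\cdots t_k^{m_k}$ whenever $k+\sum_i m_i\le N$. The paper does the same, except that it does not pass to your interaction picture with respect to $X_D$: it expands the full toggling generator and sums over chains of blocks $X_{B_kC_k}$, so it needs the vanishing of $\cF^{r_1,\ldots,r_s}_{\gamma_1,\ldots,\gamma_s}(\sigma)$ for arbitrary $\gamma\in\{0,1\}^s$ with $\bigoplus_k\gamma_k=1$, whereas your factorization through $S_D$ only requires the all-off-diagonal case $\gamma_1=\cdots=\gamma_k=1$ with $k$ odd. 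That is a mild simplification, and your bookkeeping (odd/even parity of off-diagonal products, the $k>N$ terms killed by simplex volume, the $k=1$ terms killed by the one-variable moment conditions) is all correct.

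The gap is exactly where you flag it: the multivariate vanishing identity for odd $3\le k\le N$ is asserted, not proven, and it is not a routine consequence of the single-variable Uhrig moment conditions --- it is the entire content of the universality of UDD. The paper does not re-derive it either; it recognizes that the integrals $\cF^{\vec{r}}_{\vec{\gamma}}(\sigma)$ arising here are \emph{identical} to those appearing in the single-qubit UDD analysis (the system/environment block index plays the role of the Pauli index $\alpha\in\mathbb{Z}_2$), and then invokes the known result of Yang and Liu (Lemma~\ref{lem:UDDfunctions} in the appendix, from~\cite{YangLiu08,jiang}) that the UDD times~\eqref{eq:Uhrigtimes} satisfy these nested integral conditions. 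Your plan to establish the identity from ``the Chebyshev-extremal origin of the Uhrig times'' would amount to reproving that theorem, which is a substantial piece of work and is not sketched in enough detail to assess. The clean way to close your proof is to observe that your required identities are a special case of the single-qubit UDD universality conditions and cite that result, as the paper does.
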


Because of property~\eqref{eq:symplecticdec_condition}, we call the pulse sequence an $N$-th order decoupling scheme. Note that in~\cite{arenz17} a single application of the unitary~$U_S$ was shown to decouple the system from the environment up to first order. To achieve higher order decoupling, remarkably, applying the same pulse~$U_S$ is sufficient and the number of required applications~${N}$ is independent of the number of system and environment modes. 
The times~\eqref{eq:Uhrigtimes} are those associated with the UDD sequence~\cite{uhrig} for a single qubit.  

In Theorem~\ref{thm:decoupling} (and throughout this paper), we state our bounds 
without detailed estimates on the constants in expressions such as $O(T^{N+1})$. For concrete estimates on the required DD control rate~$1/T$, a more refined analysis is necessary. As an example, we provide a corresponding rudimentary bound in Appendix~\ref{app:suffrate}. It involves the
different energy scales~$J_z$ and $J_0$ set by the (uncontrolled) system-environment interaction and the environment Hamiltonian, respectively, and takes the form $O(\frac{1}{(N+1)!}((J_z+J_0)T)^{N+1})$. This mirrors some of the analysis conducted for single qubit UDD in~\cite{uhriglidar}, but we note that the reference also provides more detailed estimates.\\

\paragraph*{Bosonic decoupling with arbitrary pulse times. }
The original derivation of UDD~\cite{uhrig,uhrig2008exact} directly focuses on the effect of $\pi$-pulses (i.e., Pauli-$\sigma_y$) applied at a priori arbitrary times~$t_1,\ldots,t_L$ to a system qubit that is coupled to a bosonic bath. 
The author focuses on a particular figure of merit defined in terms of the overlap of the time-evolved qubit state with the original state. He finds that this ``signal'' is the inverse exponential of a parameter
\begin{align}
\chi (T)= \int_0^\infty \frac{S_\beta(\omega)}{\omega^2} \big|y_L(\omega T) \big|^2 d\omega\label{eq:chiexpressionT}
\end{align}
which depends on the noise spectrum $S_\beta(\omega)$ of the system-bath coupling, as well as the pulse times~$t_1,\ldots,t_L$ via $y_L(z)=1-e^{iz}+ 2\sum_{m=1}^L (-1)^m e^{izt_m/T}$.  Expression~\eqref{eq:chiexpressionT} is then used to find optimal pulse times by   minimizing the quantity~$\chi(T)$. Furthermore, the same expression permits to compare the efficiency of different pulse sequences in a variety of regimes. In particular, it was found that for hard high-frequency cutoffs in $S_\beta(\omega)$, UDD pulse times are optimal, whereas for soft high-frequency cutoffs, the optimal sequences resemble periodic DD~\cite{pasiniuhrigpra10}. 

We argue in Appendix~\ref{app:noisespectrum} that the expression~\eqref{eq:chiexpressionT} also completely characterizes bosonic decoupling for a single mode coupled to a bath of oscillators at inverse temperature~$\beta$: Assuming that the initial state is a product state (with the thermal state of the environment), and the pulse unitary~\eqref{eq:decouplingunitary} is applied at times~$t_1,\ldots,t_L$, we find that the system's resulting evolution is described by a Gaussian quantum channel whose non-unitary component  is fully specified by the quantity~\eqref{eq:chiexpressionT}. Furthermore,~$\chi(T)$ is a direct measure for the degree of non-unitarity. 
This provides a complementary justification for the pulse sequence considered in Theorem~\ref{thm:decoupling}.  Also, all statements about the optimality of pulse sequences and the temperature-dependence of the decoupling efficiency translate immediately from the spin-boson setting to the one considered here.\\

\paragraph*{A bosonic homogenization scheme.}
Assume that the number of modes $n_s=2^m$ is a power of 2 and label the modes by bitstrings~$\nu=(v_1,\ldots,v_m)\in \{0,1\}^m$. Let us introduce the Gaussian unitaries that make up the control pulses~\footnote{We show in Appendix~\ref{app:proofsymplecticparametr} that these are actually passive Gaussian unitaries.}.
	Let $U_{y_0}$ be defined by its action
	\begin{align}\label{eq:uy0def}
	U_{y_0} Q_\nu U_{y_0}^* =P_\nu\quad\textrm{ and }\quad U_{y_0} P_\nu U_{y_0}^* =-Q_\nu
	\end{align}
	for all $\nu\in \{0,1\}^m$, on mode operators. Let us also define for $i=1,\ldots,m$
	the Gaussian unitaries $U_{x_i}$ and $U_{z_i}$ by 
	\begin{align}\label{eq:uxjuzjdef}
	\begin{matrix}
	U_{x_i} Q_{(v_1,\ldots,v_m)}U_{x_i}^*& =&Q_{(v_1,\ldots,v_{i-1},1-v_i,v_{i+1},\ldots,v_m)}\\
	U_{x_i} P_{(v_1,\ldots,v_m)}U_{x_i}^* &=&P_{(v_1,\ldots,v_{i-1},1-v_i,v_{i+1},\ldots,v_m)}\\
	U_{z_i} Q_{(v_1,\ldots,v_m)}U_{z_i}^*& =&(-1)^{v_i}Q_{(v_1,\ldots,v_m)}\\
	U_{z_i} P_{(v_1,\ldots,v_m)}U_{z_i}^* &=&(-1)^{v_i}P_{(v_1,\ldots,v_m)}
	\end{matrix}
	\end{align}
	for all $\nu =(v_1,\ldots,v_m)\in \{0,1\}^m$. We set $U_{y_i}=U_{x_i}U_{z_i}$. 
	The unitary~$U_{y_0}$ acts as a tensor product of the same passive single-mode Gaussian unitary on all modes, $U_{z_i}$ as the tensor product of single-mode phase flips on half of the modes and~$U_{x_i}$ is a product of two-mode SWAP gates between pairs of modes. Depending on the experimental setup, the difficulty of realizing 
	two-mode SWAP gates may differ significantly from that associated with single-mode passive gates. Unlike in the case of multi-qubit DD schemes (which only require single-qubit Pauli gates), this fact needs to be taken into account when analyzing e.g., the effect of finite pulse widths.

Using these unitaries, we show how to construct a multi-mode homogenization scheme from a multi-qubit DD scheme. More precisely, assume that an $(m+1)$-qubit DD scheme with qubits labeled from 0 to $m$ uses pulses which are products of single-qubit Pauli matrices $(\sigma_w)_k$ where $w\in \{x,y,z\} $ and $k=0,\ldots,m$. Then we construct the bosonic pulses by replacing Pauli  factors (retaining their order in the product) according to the substitution rules 
\begin{align}
	\begin{matrix}
	(\sigma_x)_0 & \mapsto &U_{y_0},\\
	(\sigma_y)_0 & \mapsto &U_{y_0},\\
	(\sigma_z)_0 & \mapsto &I,
	\end{matrix}\qquad \quad
	\begin{matrix}
	(\sigma_x)_i & \mapsto &U_{x_i},\\
	(\sigma_y)_i & \mapsto &U_{y_i},\\
	(\sigma_z)_i & \mapsto &U_{z_i},
	\end{matrix}\ \label{eq:replacements}
\end{align} where $k=1,\ldots,m$ and $I$ means that no pulse is applied. 
Our homogenization scheme is obtained by applying the substitution rule~\eqref{eq:replacements} to the NUDD scheme~\cite{mukhtar10,wangliu11,jiang} for $m+1$ qubits. This results in the following:

\begin{figure*}[t]
	\includegraphics[width=\textwidth]{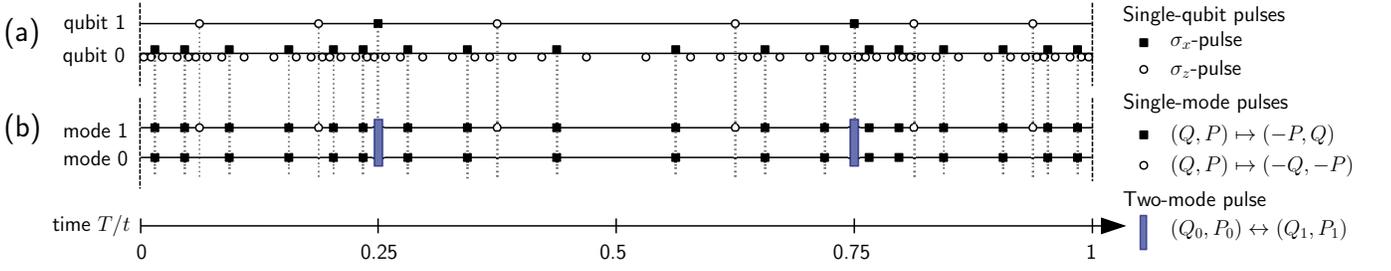}
	\caption{Bosonic homogenization scheme~(b) of suppression order~$N=2$ for two modes and the corresponding order-2 NUDD scheme~(a).\label{fig:bosonic_hom_2modesorder2} 
		The evolution describing decoherence (horizontal straight lines) is interleaved with instantaneous control pulses.}
\end{figure*}

\begin{theorem}[Bosonic homogenization sequence]\label{thm:homogenization}
	The described pulse sequence consists of $(N+1)^{2m+1}$ passive Gaussian pulses. For any analytic generator $\Xorig:[0,T]\rightarrow\mathfrak{sp}\big(2 (2^m+n_E)\big)$ of the form $\Xorig_S(t)\oplus \Xorig_E(t)$, there are $\omega\in \mathbb{R}$ and $S_E\in\Sp(2n_E)$ such that the resulting evolution~\eqref{eq:Sres} satisfies
	\begin{equation}\label{eq:homogenizationcondition}
		\|\Sres(T)-e^{\omega T J_{2^m}}\oplus S_E\|=O(T^{N+1})\ .
	\end{equation} 
\end{theorem}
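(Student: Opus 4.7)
Since $\Xorig=\Xorig_S\oplus\Xorig_E$ and the pulses act as the identity on the environment, the resulting evolution factorises as $\Sres(T)=\Sres_S(T)\oplus\Sres_E(T)$ with $S_E:=\Sres_E(T)$ a pulse-independent free evolution, so the theorem reduces to showing $\|\Sres_S(T)-e^{\omega T J_{2^m}}\|=O(T^{N+1})$ for a suitable $\omega\in\mathbb{R}$. The overall strategy is to lift the NUDD proof for $(m+1)$ qubits~\cite{mukhtar10,wangliu11,jiang} to the symplectic setting via the substitution rule~\eqref{eq:replacements}. Concretely, I would pass to the toggling frame by defining the cumulative pulse product $V(t):=\prod_{j\colon t_j<t}S_j$ and the toggled generator $\tilde X(t):=V(t)^{-1}\Xorig_S(t)V(t)\in\mathfrak{sp}(2\cdot 2^m)$, check that $V(T)$ is generator-independent, and write $\Sres_S(T)=V(T)\cdot\mathcal{T}\exp\bigl(\int_0^T\tilde X(t)\,dt\bigr)$; a Magnus expansion then identifies the effective generator as a pulse-averaged version of $\Xorig_S$ together with higher-order nested commutator corrections.

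The next step is to identify the subspace onto which this averaging projects. Let $G\subset\Sp(2\cdot 2^m)$ be the group generated by $U_{y_0}$ and $\{U_{x_i},U_{z_i}\}_{i=1}^m$. I would prove that the fixed subspace of $\mathfrak{sp}(2\cdot 2^m)$ under the conjugation action of $G$ is exactly $\mathbb{R}\cdot J_{2^m}$. This can be seen by working in the symmetric-bilinear parametrisation of $\mathfrak{sp}$: conjugation by $U_{y_0}$ projects onto the passive combinations $Q_\nu Q_{\nu'}+P_\nu P_{\nu'}$ and $Q_\nu P_{\nu'}-P_\nu Q_{\nu'}$; conjugation by $U_{z_i}$ forces $\nu_i=\nu'_i$ for each $i$, collapsing pairs to the diagonal $\nu=\nu'$; and conjugation by $U_{x_i}$ forces the coefficient of $Q_\nu^2+P_\nu^2$ to be $\nu$-independent, leaving only multiples of $\sum_\nu(Q_\nu^2+P_\nu^2)\propto J_{2^m}$.

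With the invariant subspace identified, I would then transfer the NUDD suppression result. The substitution rule~\eqref{eq:replacements} induces an algebra-theoretic correspondence between conjugation by $(\sigma_w)_i$ on $(m+1)$-qubit Pauli strings and conjugation by $U_{w_i}$ on a Pauli-type decomposition of $\mathfrak{sp}(2\cdot 2^m)$: the sign with which $U_{w_i}$ conjugates each symplectic basis element matches the sign with which $(\sigma_w)_i$ conjugates the corresponding Pauli string, with the caveat that $(\sigma_z)_0$ acts trivially. This trivialisation collapses the $(\sigma_z)_0$-nesting layer of NUDD to a single segment and accounts for the reduction from $(N+1)^{2(m+1)}$ to $(N+1)^{2m+1}$ pulses. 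Invoking the NUDD analysis of~\cite{mukhtar10,wangliu11,jiang} order by order in the Magnus series then shows that every component of the effective generator outside the $G$-invariant subspace is $O(T^{N+1})$; the surviving piece is a scalar multiple $\omega T J_{2^m}$, which exponentiates to $e^{\omega T J_{2^m}}$ and yields~\eqref{eq:homogenizationcondition}.

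The main obstacle is the interplay of the second and third steps: the group $G$ is \emph{not} isomorphic as an abstract group to the $(m+1)$-qubit Pauli group (for instance $U_{y_0}^2\neq I$), so the transfer can only take place at the level of the adjoint action on $\mathfrak{sp}(2\cdot 2^m)$. Verifying that this adjoint action faithfully reproduces the sign structure of the $(m+1)$-qubit Pauli conjugation action -- nesting level by nesting level in the NUDD tree -- and that the resulting $G$-invariant subspace is exactly $\mathbb{R}\cdot J_{2^m}$ and no larger, is the crux of the argument. The remaining timing estimates then inherit directly from the qubit NUDD analysis.
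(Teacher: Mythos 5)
Your proposal is correct and follows essentially the same route as the paper: pass to the toggling frame, exploit the Pauli-like adjoint action of the pulse group on a Pauli-type basis of $\mathfrak{sp}(2\cdot 2^m)$ (the paper's Lemma~\ref{lem:parametrizesymplecticgroupalgebra}\eqref{it:adjointaction}), and transfer the NUDD cancellation conditions via the substitution rule~\eqref{eq:replacements}, with the invariant part being $\mathbb{R}\cdot J_{2^m}=\mathbb{R}\cdot S_{(1,1,0,\ldots,0)}$. The only cosmetic differences are that the paper works with a Dyson rather than a Magnus expansion (so the surviving truncation is $c_1 I+c_2 J_{2^m}$, converted to $e^{\omega T J_{2^m}}$ by a short symplecticity argument you gloss over) and that it phrases the higher-order cancellations via the condition $\prod_k S_{\alpha_k}\propto S_{\bigoplus_k\alpha_k}\in\{\pm I,\pm J_{2^m}\}$ rather than via the $G$-invariant subspace, which by itself only controls the first-order term.
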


Theorem~\ref{thm:homogenization} assumes that system and environment are already decoupled, i.e., the original Hamiltonian~\eqref{eq:Hamiltonian} is of the form $\Horig(t)=H_S(t)\otimes I_E+I_S\otimes H_E(t) $; it guarantees that the system's evolution is homogenized since the symplectic generator $J_{2^m}\in \mathfrak{sp}(2\cdot2^m)$ is associated with $H_0$. Correspondingly, we call the pulse sequence constructed here a universal bosonic homogenization sequence of order~$N$. Combining decoupling and homogenization schemes (by concatenation~\footnote{We remark that Theorem~\ref{thm:homogenization} extends to (not necessarily symplectic) generators $\Xorig(t):[0,T]\rightarrow\mathbb{R}^{2n\times2n}$. Therefore, it can be applied in conjunction with Theorem~\ref{thm:decoupling}.}) leads to an effective evolution of the form~\eqref{eq:decoupledhomogenized}. In the remainder of this paper, we sketch the proofs of Theorems~\ref{thm:decoupling} and~\ref{thm:homogenization}. \\

\paragraph*{Bosonic decoupling using Uhrig times.}
To prove Theorem~\ref{thm:decoupling}, we use the direct-sum structure of the matrix~$\Xorig(t)\in\mathfrak{sp}(2(n_S+n_E))$, that is
\begin{equation}\label{eq:Xbosdec}
\Xorig(t)=\begin{pmatrix}
X_{SS}(t) &X_{SE}(t)\\
X_{ES}(t)& X_{EE}(t)
\end{pmatrix}\ .
\end{equation} Here $X_{BC}(t)$ are analytic functions of real $2n_B\times 2n_C$ matrices for $B,C\in \{S,E\}$ by assumption;
$X_{SE}(t)$  and $X_{ES}(t)$ are responsible for system-environment interactions. 
We define the piecewise constant function $\sigma:[0,1]\rightarrow \{-1,1\}$ that satisfies $\sigma(0)=1$ and switches its sign whenever the pulse~$-I_S\in \Sp(2n_S)$ (associated with the unitary $U_S$) is applied, i.e., at each~$\{t_j/T\}_{j=1}^{N}$. For our analysis, we change into the toggling frame~\footnote{Let $\Sorig(t),\Scontrol(t)\in \Sp(2n)$ denote the original and the control evolution, respectively. Then the toggling frame evolution is defined as $\Stf(t)={\Scontrol(t)}^{-1} \Sorig(t)$ with generator $\Xtf(t)={\Scontrol(t)}^{-1}\Xorig(t) \Scontrol(t)\in \mathfrak{sp}(2n)$. The pulse sequences considered throughout this paper consist in applying instantaneous pulses~$S_j\in \Sp(2n_s)$, such that~$\Scontrol(t)$ is simply the product of all pulses applied up to time $t$.} with evolution~$\Stf(T)$ generated by
\begin{equation}
\Xtf(t)=\begin{pmatrix}
X_{SS}(t)& \sigma(t/T)X_{SE}(t)\\
\sigma(t/T)X_{ES}(t) & X_{EE}(t)
\end{pmatrix}\ . \label{eq:xtildegeneralop}
\end{equation}
Direct computation of the Dyson series of~$\Stf(T)$ in Appendix~\ref{app:proofdec} shows that a sufficient condition for $N$-th order decoupling is the following:

The function~$\sigma(t)$ is (or equivalently the times $t_j$ are) a solution to the integral equations
\begin{equation}
\cF_{\gamma_1,\ldots,\gamma_s}^{r_1,\ldots,r_s}(\sigma)=0\quad
\textrm{if}\quad
\begin{cases}
s+\sum_{k=1}^s r_k\leq N\textrm{ and}\\
\bigoplus_{k=1}^s \gamma_k=1\ 
\end{cases}	\label{eq:conditionfsigma}
\end{equation}
for all $s\in\mathbb{N}$, $r_1,\ldots,r_s\in\mathbb{N}_0$ and $\gamma_1,\ldots,\gamma_s\in \mathbb{Z}_2$, where 
$\cF_{\gamma_1,\ldots,\gamma_s}^{r_1,\ldots,r_s}(\sigma)=\int_0^1 d\tau_s\cdots \int_0^{\tau_{2}} d\tau_1 \prod_{k=1}^s \sigma(\tau_k)^{\gamma_k}\tau_k^{r_k}$  and where $\oplus$ denotes addition modulo 2.

The same integral equations~\eqref{eq:conditionfsigma} appear in the analysis of the UDD scheme for a single qubit~\cite{yang2008universality,jiang}; in particular, it is known that the times~\eqref{eq:Uhrigtimes} are a solution. Thus we obtain a bosonic decoupling scheme (for an arbitrary number of modes) by using the times of the single-qubit UDD sequence. \\

\paragraph*{Multiple qubits and multiple bosonic modes.}
The proof of Theorem~\ref{thm:homogenization} relies on a connection between multi-qubit systems and bosonic systems: we identify elements of~$\Sp(2\cdot 2^m)$ and a basis of its Lie algebra~$\mathfrak{sp}(2\cdot2^m)$ which satisfy commutation relations analogous to those obeyed by the Pauli matrices. We associate mode operators with basis vectors of $\mathbb{R}^{2\cdot 2^m}\cong  \mathbb{R}^2\otimes (\mathbb{R}^2)^{\otimes m}$
by
\begin{align}
\begin{matrix}
	Q_{(v_1,\ldots,v_m)} &\quad \leftrightarrow\qquad \ket{q}\otimes \ket{e_{v_1}}\otimes\cdots\otimes \ket{e_{v_m}}\\
	P_{(v_1,\ldots,v_m)} &\quad \leftrightarrow\qquad \ket{p}\otimes \ket{e_{v_1}}\otimes\cdots\otimes \ket{e_{v_m}}
\end{matrix}\ . \label{eq:tensorsymplectic}
\end{align}  Here we use an orthonormal basis $\ket{q},\ket{p}$ of~$\mathbb{R}^2$ for the first factor (which we will later identify with `qubit 0'), as well as an orthonormal basis $\ket{e_0},\ket{e_1}$ for each of the remaining $m$ factors (which will be identified with `qubits 1 to $m$').
On $\mathbb{R}^2$, let us define the matrices
\begin{equation}
I=\begin{pmatrix}
1&0\\0&1
\end{pmatrix},\;
x=\begin{pmatrix}
0&1\\1&0
\end{pmatrix},\;
y=\begin{pmatrix}
0 & -1\\ 1&0
\end{pmatrix},\;
z=\begin{pmatrix}
1 & 0\\ 0&-1
\end{pmatrix}
\end{equation} that we also write as $S_{(0,0)}$, $S_{(1,0)}$, $S_{(1,1)}$, $S_{(0,1)}$, respectively. 

\begin{lemma}\label{lem:parametrizesymplecticgroupalgebra}
	For $\alpha=(a_0,a_1,\ldots,a_m)\in(\mathbb{Z}_2^2)^{m+1}$, define the matrix
$S_{\alpha}=S_{a_0}\otimes S_{a_1}\otimes\cdots\otimes S_{a_m}$ on $\mathbb{R}^2\otimes (\mathbb{R}^2)^{\otimes m}$. 
	\begin{enumerate}[(a)]
			\item There is a subset $\Gamma\subset (\mathbb{Z}_2^2)^{m+1}$ such that $\{S_\alpha\}_{\alpha\in\Gamma}$ is a basis of the Lie algebra~$\mathfrak{sp}(2\cdot2^m)$. \label{it:symplecticalgebra}
			\item Let $\tilde{\Gamma}$ be the set of sequences $\alpha=(a_0,a_1,\ldots,a_m)\in (\mathbb{Z}_2^2)^{m+1}$
			such that $a_0\in \{(0,0),(1,1)\}$. Then~$S_\beta$ is orthogonal symplectic for every $\beta\in\tilde{\Gamma}$. \label{it:symplecticgroup}
			\item The symplectic form is given by  $J_{2^m}=-S_{(1,1,0,\ldots,0)}$.\label{it:symplecticformasssmatrix}
			\item The adjoint action of $S_\beta\in \Sp(2\cdot 2^m)$ is 
			\begin{equation}\label{eq:adjointactionsymplectic}
				S_\beta^{-1} S_\alpha S_\beta=(-1)^{\langle\alpha,\beta\rangle}S_\alpha 
			\end{equation}
		for all $ \alpha\in{\Gamma}$, $\beta\in\tilde{\Gamma}$, where
		\begin{equation}
			\langle \alpha,\beta\rangle =\sum_{j=0}^{m}  a_j^T \begin{pmatrix}
			0&1\\-1&0
			\end{pmatrix}b_j
		\end{equation} 
		is the usual symplectic inner product on $(\mathbb{Z}_2^2)^{m+1}$. \label{it:adjointaction}
	\end{enumerate}  
\end{lemma}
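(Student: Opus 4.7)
My plan is to treat the four parts in the order (c), (d), (b), (a), since (c) and (d) provide the algebraic tools that reduce the remaining claims to short verifications.

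First I would establish (c) by direct computation in the tensor basis~\eqref{eq:tensorsymplectic}. With ordering $Q_1,\ldots,Q_{2^m},P_1,\ldots,P_{2^m}$, the standard symplectic form $J_{2^m}=\begin{pmatrix}0&I\\-I&0\end{pmatrix}$ acts as $\ket{q}\mapsto-\ket{p}$ and $\ket{p}\mapsto\ket{q}$ on the first tensor factor, and as the identity on $(\mathbb{R}^2)^{\otimes m}$; this operator is precisely $-y\otimes I^{\otimes m}=-S_{(1,1,0,\ldots,0)}$. In particular $\gamma:=(1,1,0,\ldots,0)\in\tilde{\Gamma}$, a fact I would reuse in (a). For (d), I would reduce to the single-factor case: a direct inspection of the sixteen pairs shows $S_aS_b=(-1)^{\langle a,b\rangle}S_bS_a$ for all $a,b\in\mathbb{Z}_2^2$, where $\langle a,b\rangle=a_1b_2-a_2b_1$ is the symplectic form on $\mathbb{Z}_2^2$. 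Taking tensor products factorizes the commutator sign into $S_\alpha S_\beta=(-1)^{\langle\alpha,\beta\rangle}S_\beta S_\alpha$, which combined with $S_\beta^{-1}=S_\beta^T$ (immediate from the orthogonality of $I,x,y,z$) yields~\eqref{eq:adjointactionsymplectic}.

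For (b), orthogonality of $S_\beta$ is inherited from the orthogonality of each single-factor Pauli matrix. The symplectic condition $S_\beta J_{2^m}S_\beta^T=J_{2^m}$ factors through the tensor product and reduces to the single-factor requirement $S_{a_0}yS_{a_0}^T=y$, the remaining factors contributing only $S_{a_j}IS_{a_j}^T=I$. A four-case check confirms this identity exactly when $S_{a_0}\in\{I,y\}$, i.e.\ $a_0\in\{(0,0),(1,1)\}$.

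For (a), set $\epsilon(\alpha):=(-1)^{|\{j\,:\,a_j=(1,1)\}|}$, so that $S_\alpha^T=\epsilon(\alpha)S_\alpha$. The condition $S_\alpha\in\mathfrak{sp}(2\cdot 2^m)$ (equivalently, $JS_\alpha$ symmetric) becomes $JS_\alpha=-\epsilon(\alpha)S_\alpha J$; using (c) to write $J=-S_\gamma$ and (d) to commute $S_\gamma$ past $S_\alpha$, this reduces to a single $\mathbb{Z}_2$-parity condition on $\alpha$ depending on the value of $a_0$. Define $\Gamma\subset(\mathbb{Z}_2^2)^{m+1}$ as the subset of $\alpha$ satisfying this condition. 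Linear independence of $\{S_\alpha\}_{\alpha\in\Gamma}$ is inherited from the orthogonality of the entire family $\{S_\alpha\}_{\alpha\in(\mathbb{Z}_2^2)^{m+1}}$ in $\Mat_{2\cdot 2^m}(\mathbb{R})$ under the Hilbert--Schmidt inner product, and an elementary count -- splitting on the value of $a_0$ and tallying parities of $(1,1)$-entries among $a_1,\ldots,a_m$ -- yields $|\Gamma|=2^m(2^{m+1}+1)=\dim\mathfrak{sp}(2\cdot 2^m)$, so spanning follows. The main obstacle I anticipate is purely bookkeeping: several sign conventions (the sign in $J=-S_\gamma$, the form of $\langle\cdot,\cdot\rangle$, and the symmetry type $\epsilon(\alpha)$ of $S_\alpha$) must fit together consistently across the four parts, and it is easy to drop a factor of $-1$ somewhere. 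Once these conventions are pinned down, each individual verification is short.
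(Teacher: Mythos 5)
Your proposal is correct and follows essentially the same route as the paper: the same tensor-factor computations for (b)--(d), the same parity characterization of $\Gamma$ (your condition via $J=-S_\gamma$ and the commutation sign reproduces exactly the paper's ``$\delta(\alpha)+\delta_{a_0\in\{(0,1),(1,0)\}}$ odd''), and the same dimension count $|\Gamma|=2^{2m+1}+2^m=\dim\mathfrak{sp}(2\cdot 2^m)$. The only cosmetic differences are your reordering of the parts so that (c) and (d) feed into (a), and your explicit Hilbert--Schmidt justification of linear independence, which the paper leaves implicit.
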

\noindent The proof of this Lemma is given in Appendix~\ref{app:proofsymplecticparametr}. There we also show that the homogenization unitaries $U_{x_i},U_{y_j},U_{z_i}$ from Eqs.~\eqref{eq:uy0def} and~\eqref{eq:uxjuzjdef} are associated with the symplectic matrices $x_i,y_j,z_i\in \{S_\beta\ |\ \beta\in\tilde{\Gamma}\}$ for $i\ge1$, and $j\ge0$. 
Note that the relations~\eqref{eq:adjointactionsymplectic} are formally analogous to the commutation relations $\sigma_\beta^{-1} \sigma_\alpha \sigma_\beta = (-1)^{\langle\alpha,\beta\rangle}\sigma_\alpha$ of Pauli operators for $m+1$ qubits~\footnote{Here, for $\alpha=(a_0,a_1,\ldots,a_m)\in (\mathbb{Z}_2^2)^{m+1}$, the multi-qubit Pauli operators are defined as $\sigma_\alpha=\sigma_{a_0}\otimes \sigma_{a_1}\otimes\cdots\otimes \sigma_{a_m}$ where the factors in the tensor product are the single-qubit Pauli matrices $\sigma_{(0,0)}=I$, $\sigma_{(1,0)}=\sigma_x$, $\sigma_{(1,1)}=\sigma_y$, $\sigma_{(0,1)}=\sigma_z$. }.\\

\paragraph*{Bosonic homogenization from qubit DD.}
The close resemblance of the commutation relations~\eqref{eq:adjointactionsymplectic} with those of Pauli matrices is key to our construction of homogenization schemes.  We remark that for qubit DD schemes with Pauli pulses, it is precisely the phases~$(-1)^{\langle \alpha, \beta\rangle}$ that lead to a cancellation of unwanted terms in the effective evolution. However, in contrast to the qubit setting, the set of available pulses  in the bosonic setting is restricted: only matrices in~$\{ S_\beta|\ \beta \in\tilde{\Gamma}\}$, i.e.,~that act as~$y$ or~$I$ on  `qubit 0' (the first factor of~\eqref{eq:tensorsymplectic}) are available as control pulses. This motivates the substitution rules~\eqref{eq:replacements}, where we replace Pauli operators~$(\sigma_w)_k$ by symplectic matrices~$w_k$ for $k\in \{1,\ldots,m\}$ while on `qubit 0' we only allow~$y$ and~$I$.

To analyze the effect of the resulting pulse sequence on a decoupled evolution, suppose the original generator $\Xorig(t)=\Xorig_S(t)\oplus \Xorig_E(t)\in\mathfrak{sp}\big(2\cdot (2^m+n_E)\big)$ satisfies
\begin{equation}
\Xorig_S(t)=\sum_{\alpha\in \Gamma} B_\alpha(t)S_\alpha \ , \textrm{ where } B_\alpha(t)=\sum_{r=0}^\infty b_{\alpha,r}t^r
\end{equation} 
for $b_{\alpha,r}\in \mathbb{R}$ and where we use the basis of~$\mathfrak{sp}(2\cdot 2^m)$ from Lemma~\ref{lem:parametrizesymplecticgroupalgebra}.
Suppose the control sequence is defined by the times~$\{t_j\}_{j=1}^L$ and a function~$\beta:\{1,\ldots,L\}\to \tilde{\Gamma}$ specifying which control pulse~$S_{\beta(j)}$ is applied at time $t_j$. 
Since the original evolution is decoupled, it is sufficient to restrict to the system only (and omit the index $S$).
It is convenient to change into the toggling frame with evolution $\Stf(T)$ after time $T$.
By exploiting the parametrization of the symplectic Lie group and its Lie algebra introduced in Lemma~\ref{lem:parametrizesymplecticgroupalgebra} and using the relations~\eqref{eq:adjointactionsymplectic}, we conclude that the toggling frame generator takes the form
\begin{equation}\label{eq:togglinggenerator}
	\Xtf(t)=\sum_{\alpha\in\Gamma} F_\alpha(t/T)B_\alpha(t) S_\alpha 
\end{equation} where we defined the functions
\begin{equation}\label{eq:Falphahomogenization}
	F_\alpha(t/T)=(-1)^{\sum_{j: t_j\le t} \langle \alpha, \beta(j)\rangle} \quad \textrm{ for } t\in [0,T]\ .
\end{equation}
Using the generator's form~\eqref{eq:togglinggenerator}, the toggling frame evolution
$\Stf(t)=\cT\exp\left[\int_0^t \Xtf(\tau)d\tau\right]$ can be expanded in a Dyson series as
\begin{equation}\label{eq:togglingframeDyson}
	\Stf(t)=\sum_{s=0}^\infty \sum_{\vec{\alpha}\in \Gamma^s}\sum_{{\vec{r}}=0}^\infty \prod_{k=1}^s S_{\alpha_k}b_{\alpha_k,r_k}\cF_{\vec{\alpha}}^{\vec{r}} \ T^{s+\sum_{k=1}^sr_k}
\end{equation} where $\vec{\alpha}=(\alpha_1,\ldots,\alpha_s)$, ${\vec{r}}=(r_1,\ldots,r_s)$ and where
\begin{equation}\label{eq:togglingintegrals}
	\cF_{\vec{\alpha}}^{\vec{r}}(\{F_\alpha\}) =\int_0^1d\tau_s\cdots\int_0^{\tau_{2}}d\tau_1 \prod_{k=1}^s F_{\alpha_k}(\tau_k)\tau_k^{r_k}\ .
\end{equation} We can directly read off the $N$-th order term from~\eqref{eq:togglingframeDyson}.
Furthermore, it is easy to see that an approximate system's evolution of the form $c_1 I_S-c_2J_{2^m}$ for some $c_1,c_2\in\mathbb{R}$ is homogenized (cf.~Appendix~\ref{app:proofhomogen} for the calculation). Hence, the following condition is sufficient to achieve homogenization up to order~$N$: 
for  $s\in \mathbb{N}$, $r_1,\ldots,r_s\in \mathbb{N}_0$, and $\alpha_1, \ldots, \alpha_s\in \Gamma$, we have
\begin{align}
\cF_{\vec{\alpha}}^{\vec{r}}(\{F_\alpha\}) =0\quad \textrm{ if }
\begin{cases}
s+\sum_{k=1}^s r_k\le N\textrm{ and }\\
\prod_{k=1}^s S_{\alpha_k} \not\in \{\pm I_S,\pm  J_{2^m}\}\ .
\end{cases}	\label{eq:conditionshom1}
\end{align}
A similar analysis applies to $(m+1)$-qubit DD schemes with multi-qubit Pauli pulses, see~\cite{jiang}: 
Identically defined functions~\eqref{eq:Falphahomogenization} appear in the toggling frame generator and give rise to the same coefficients~\eqref{eq:togglingintegrals} in the Dyson series of the toggling frame evolution.
As a consequence, any qubit decoupling scheme based on Pauli pulses provides the necessary cancellations when translated to the bosonic homogenization setting using the substitution rule~\eqref{eq:replacements}.

In more detail, let a universal $N$-th order $(m+1)$-qubit DD scheme be defined by $L\in\mathbb{N}$ pulses $U_j=\sigma_{\beta(j)}$ for $\beta: \{1,\ldots,L\}\to (\mathbb{Z}^2_2)^{m+1}$ that are applied to the system at times $t_j$. 
Then the functions~$F^\textsf{qubit}_\alpha$ defined by~\eqref{eq:Falphahomogenization} for~$\alpha\in (\mathbb{Z}^2_2)^{m+1}$ satisfy
\begin{align}
\cF_{\vec{\alpha}}^{\vec{r}}(\{F^\textsf{qubit}_\alpha\}) =0\quad\textrm{ if }
\begin{cases}
s+\sum_{k=1}^s r_k\le N\textrm{ and }\\
\bigoplus_{k=1}^s \alpha_k \neq (0,\ldots,0)\ .
\end{cases}	\label{eq:conditionsqubitdec}
\end{align}
for all $s\in \mathbb{N}$, $r_1,\ldots,r_s\in \mathbb{N}_0$, and $\alpha_1, \ldots, \alpha_s\in (\mathbb{Z}^2_2)^{m+1}$. 

The associated bosonic homogenization scheme 
(obtained using the substitution rule~\eqref{eq:replacements}) then has toggling frame generator specified by functions $F^\textsf{bos}_\alpha$ defined as
\begin{equation}\label{eq:homFalphabeta'}
        F^\textsf{bos}_\alpha(t/T)=(-1)^{\sum_{j: t_j\le t} \langle \alpha, \beta'(j)\rangle} \quad \textrm{ for } t\in [0,T]\ .
\end{equation} for all $\alpha\in \Gamma$. Here $\beta'(j)\in(\mathbb{Z}^2_2)^{m+1}$  differs from $\beta(j)\in(\mathbb{Z}^2_2)^{m+1}$
    only in the first entry (associated with qubit~$0$), where $(1,0)$ (respectively~$(0,1)$) is replaced by $(1,1)$ (respectively $(0,0)$) as prescribed by~\eqref{eq:replacements}. 
    With~\eqref{it:symplecticformasssmatrix}, it is straightforward to verify (see Appendix~\ref{app:proofhomogen} for the details) that property~\eqref{eq:conditionsqubitdec}
    of the functions $F^{\textsf{qubit}}_\alpha$ implies the desired property~\eqref{eq:conditionshom1} for the functions~$F^{\textsf{bos}}_\alpha$. In other words, the decoupling property in the qubit setting translates to homogenization of bosonic modes. 

Having established a general connection between universal $(m+1)$-qubit DD schemes and bosonic homogenization of~$2^m$ modes, Theorem~\ref{thm:homogenization} follows immediately by applying this to the NUDD sequence. The latter achieves $N$-th order decoupling of $(m+1)$~qubits with $(N+1)^{2(m+1)}$ Pauli pulses and is defined recursively by concatenation of Uhrig sequences (cf.~Appendix~\ref{app:udd} for a revision of the NUDD sequence). Examples of the resulting bosonic homogenization schemes are shown in Fig.~\ref{fig:bosonic_hom_2modesorder2} and Fig.~\ref{fig:bosonic_hom_4modesorder1}.\\

\paragraph*{Conclusions. }
Our work introduces novel, highly efficient dynamical decoupling schemes for bosonic systems. Instead of  applying finite-dimensional (qubit) decoupling procedures to distinguished subspaces, our schemes are of a genuinely continuous-variable nature. This leads to remarkably simple schemes involving only passive Gaussian unitaries. On a conceptual level, our work establishes a tight connection between qubit- and continuous-variable schemes. In particular, it implies for example that considerations related to pulse imperfections such as finite widths (see e.g.,~\cite{yang2008universality,uhrig2010efficient}) translate immediately to our bosonic schemes. More generally, this analogy may be useful elsewhere to lift qubit information processing primitives to the bosonic context. On a practical level, we believe that our protocols could become a powerful tool for continuous-variable quantum information processing as they pose minimal experimental requirements.\\

\paragraph*{Acknowlegdements. }
RK is supported by the Technische Universit\"at M\"unchen -- Institute for Advanced Study funded by the German Excellence Initiative and the European Union Seventh Framework Programme under grant agreement no.~291763. He acknowledges support by the German Federal Ministry of Education through the funding program Photonics Research Germany, contract no.~13N14776 (QCDA-QuantERA). MH is supported by the International Max Planck Research School for Quantum Science and Technology at the Max-Planck-Institut f\"ur Quantenoptik.

\bibliographystyle{h-physrev}
\bibliography{literature}

\clearpage
\newpage
\appendix

\section{General decoupling condition}\label{app:generaldecoupling}  
Here we revisit the analysis of decoupling schemes using the toggling frame and Dyson expansion. We formulate this in terms of general matrix Lie groups. This will be convenient since both multi-qubit decoupling as well as bosonic homogenization schemes fall into this setup. In Section~\ref{app:udd} we give a brief summary of the UDD and NUDD schemes. The former will be related to our bosonic decoupling scheme and the latter to the homogenization scheme.

\subsection{Setup}\label{app:setup}
The setup is as follows. Let $G\subset \mathsf{GL}(\mathbb{C}^{d_S})$ or $G\subset \mathsf{GL}(\mathbb{R}^{d_S})$ be a matrix Lie group associated with a ``system'' $\cH_S=\mathbb{C}^{d_S}$ or $\cH_S=\mathbb{R}^{d_S}$, respectively. For $m$ qubits we will identify $G=\mathsf{U}(2^{m})$ whereas~$G=\Sp(2n_S)$ in the bosonic setting with $n_S$ modes. We will assume that the Lie algebra~$\mathfrak{g}$ of~$G$ has basis~$\{Y_\alpha\}_{\alpha\in A}$. 

We will consider an environment or bath~$\cH_E$ with associated Lie algebra $\cB(\cH_E)$ consisting of all bounded operators on~$\cH_E$. Let $\Xorig:[0,T]\rightarrow\mathfrak{g}\otimes\cB(\cH_E)$ be a time-dependent generator describing system-bath interactions, i.e., it is of the form
\begin{equation}\label{eq:systembathcoupling}
\Xorig(t)=\sum_{\alpha\in A} Y_\alpha \otimes B_\alpha(t)
\end{equation}
for some elements $B_\alpha(t)\in\cB(\cH_E)$. We will further assume that these functions are analytic with expansion
\begin{eqnarray}
B_\alpha(t)=\sum_{r=0}^\infty b_{\alpha,r} t^r\quad\textrm{ for each }\alpha\in A\ .\label{eq:analyticexpansionmodeops}
\end{eqnarray}
Let $\xorig:[0,T]\rightarrow \mathsf{GL}(\cH_S\otimes\cH_E)$ define the original (uncontrolled) evolution generated by~$\Xorig$, i.e., it is defined by~$\frac{d}{dt}\xorig(t)=\Xorig(t) \xorig(t)$ for~$t\in(0,T)$ and~$\xorig(0)=I_S\otimes I_E$. We note that we use upper case letters to denote Lie algebra elements and lower case letters to denote Lie group elements throughout this section.
Consider the adjoint action $\ad:G\rightarrow\mathsf{GL}(\mathfrak{g})$ defined as
\begin{equation}
\ad(x)(Y)=xYx^{-1}\quad\textrm{ for all }x\in G\textrm{ and }Y\in\mathfrak{g}\ .
\end{equation}
We fix a family $\{x_\beta\}_{\beta\in B}\subset G$ of group elements which act diagonally in the chosen basis of~$\mathfrak{g}$ in the sense that
\begin{equation}
\ad(x_\beta)(Y_\alpha)=(-1)^{\langle\alpha,\beta\rangle}Y_\alpha \label{eq:adxinvidentity}
\end{equation}for all $(\alpha,\beta)\in A\times B$. We note that in the cases of interest, $\langle\alpha,\beta\rangle$ is the symplectic inner product modulo 2 as defined in the main article. 
Let us also assume that each $x_\beta$ has an infinitesimal generator~$X_\beta\in\mathfrak{g}$ up to a complex phase $e^{i\varphi}$ where $\varphi\in\mathbb{R}$ in the sense that 
\begin{equation}
x_\beta=e^{i\varphi} e^{X_\beta}\qquad\textrm{ for each }\beta\in B\ . \label{eq:generatorxX}
\end{equation}
In the case where $G$ is a real Lie group, the phase $e^{i\varphi}$ should be replaced by $\pm 1$. We note that the adjoint action $\ad(x_\beta)$ does not depend on this phase.

Consider the stroboscopic application of pulses $x_{\beta(\lambda)}\in G$ to the system at times $t_\lambda=T\Delta_\lambda\in [0,T]$, for  each $\lambda$ belonging to a finite set~$\Lambda$. That is, the function $\beta:\Lambda\rightarrow B$ specifies which pulse is applied at time~$t_\lambda$.
Let us define the control evolution $\xcontrol:[0,T]\rightarrow \mathsf{GL}(\cH_S\otimes\cH_E)$ as the product of all pulses applied up some time where the order of factors is defined by the ordering $t_\lambda\le t_{\lambda'}$ of times, that is (again, up to a phase~$e^{i\varphi}$ or~$\pm 1$)
\begin{equation}\label{eq:prodpulses}
\xcontrol(t)=\prod_{\lambda: \,t_\lambda\leq t} x_{\beta(\lambda)} \otimes I_{\cH_E}\ .
\end{equation} We will assume that applying all pulses up to time~$T$ amounts to the identity operation (again up to a phase).
We are interested in the evolution $\xres:[0,T]\rightarrow \mathsf{GL}(\cH_S\otimes\cH_E)$ that results if we apply the pulse~$x_{\beta(\lambda)}$ (instantaneously) at time $t_\lambda$ for each $\lambda\in \Lambda$ and let the system evolve freely under $\Xorig(t)$ at all other times.

\subsection{Toggling frame, Dyson expansion and sufficient decoupling criteria}\label{app:decouplingcriteria}
To analyze the evolution $\xres:[0,T]\rightarrow \mathsf{GL}(\cH_S\otimes\cH_E)$, it is convenient to change into the toggling frame with evolution $\xtf:[0,T]\rightarrow\mathsf{GL}(\cH_S\otimes\cH_E)$ defined as~$\xtf(t)= \xcontrol(t)^{-1} \xres(t)$ for all $t\in [0,T]$. Its generator is then given by~$\Xtf(t)= \xcontrol(t)^{-1} \Xorig(t) \xcontrol(t)$ and hence independent of the phases in $x_\beta$ and $\xcontrol(t)$.
Eq.~\eqref{eq:adxinvidentity} implies that the toggling generator is of the simple form
\begin{eqnarray}
\Xtf(t)&= \sum_{\alpha\in A} F_\alpha (t/T)\, Y_\alpha \otimes B_\alpha (t)\label{eq:togglingframexplicitgen}
\end{eqnarray} 
where the function  $F_\alpha: [0,1]\to \{-1,1\}$ for $\alpha\in A$ is defined as
\begin{eqnarray}
F_\alpha (\tau)= (-1)^{\sum_{\lambda: \Delta_\lambda\leq \tau} \langle\alpha,\beta(\lambda)\rangle}\quad\textrm{for }\tau\in [0,1]\ .\label{eq:falphafamilydef}
\end{eqnarray}
Expanding the toggling frame evolution $\xtf(T) $ generated by~\eqref{eq:togglingframexplicitgen} in a Dyson series gives 
\begin{equation}\label{eq:Dyson_expansion_toggling}
\xtf(T)= \sum_{s=0}^\infty \sum_{\vec{\alpha}\in A^s} \prod_{k=1}^s Y_{\alpha_k}  \otimes \sum_{\vec{r}=0}^\infty\prod_{k=1}^s b_{\alpha_k,r_k} \cF^{\vec{r}}_{\vec{\alpha}}\ T^{s+\sum_{k=1}^s r_k} 
\end{equation}
for $\vec{\alpha}=(\alpha_1,\ldots,\alpha_s)$, $\vec{r}=(r_1,\ldots,r_s)$ where we have defined the scalars
\begin{equation}\label{eq:def_F}
\cF^{\vec{r}}_{\vec{\alpha}}=\int_{0}^1 d\tau_s\cdots \int_0^{\tau_{2}} d\tau_1\, \prod_{k=1}^s F_{\alpha_k}(\tau_k)\tau^{r_k}_k \;. 
\end{equation}
We note that after time $T$, the toggling frame evolution~$\xtf(T)$ is equal to the resulting evolution~$\xres(T)$ up to a phase. 
The $N$-th order term in $T$ is now given by Eq.~\eqref{eq:Dyson_expansion_toggling}.
This gives the following statement (cf.~\cite{jiang,kuolidar11}), where we write $A\not\propto B$ if $A$ is not a scalar multiple of $B$.

\begin{theorem}[Decoupling criterion]\label{thm:decoupling_criterion}
	Consider $N\in\mathbb{N}$ and the scalars $\cF^{r_1,\ldots,r_s}_{\alpha_1,\ldots,\alpha_s}$ defined by~\eqref{eq:def_F}. 
	For all $s\in \mathbb{N}$, $r_1,\ldots,r_s\in \mathbb{N}_0$, and $\alpha_1, \ldots, \alpha_s\in A$ assume that
	\begin{equation}
	\cF_{\vec{\alpha}}^{\vec{r}}(\{F_\alpha\}) =0\quad \textrm{ if }
	\begin{cases}
	s+\sum_{k=1}^s r_k\le N\textrm{ and }\\
	Y_{\alpha_s}\cdots Y_{\alpha_1} \not\propto I_S\ .
	\end{cases}	\label{eq:conditionsgeneral}
	\end{equation}
	Then there is an operator $\tilde{B}\in \mathcal{B}(\cH_E)$ such that
	\begin{equation}
	\| \xres(T)-I_S\otimes \tilde{B}\|= O(T^{N+1}) \; .
	\end{equation} 
\end{theorem}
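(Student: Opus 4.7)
The plan is to treat the Dyson expansion~\eqref{eq:Dyson_expansion_toggling} as the Taylor series of the analytic map $T\mapsto \xtf(T)$ at the origin and to read off the desired bound from the hypothesis~\eqref{eq:conditionsgeneral} coefficient by coefficient. By the standing assumption that the full pulse product equals $I_S$ up to a global scalar $e^{i\varphi}$, we have $\xcontrol(T)=e^{i\varphi}I_S\otimes I_E$, hence $\xres(T)=\xcontrol(T)\xtf(T)=e^{i\varphi}\xtf(T)$. It therefore suffices to produce $\tilde{B}_0\in\cB(\cH_E)$ with $\|\xtf(T)-I_S\otimes \tilde{B}_0\|=O(T^{N+1})$ and then set $\tilde{B}:=e^{i\varphi}\tilde{B}_0$.

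Next I would regroup the series~\eqref{eq:Dyson_expansion_toggling} by the total $T$-degree $n:=s+\sum_{k=1}^s r_k$. For each fixed $n$ the set of admissible triples $(s,\vec{\alpha},\vec{r})$ is finite (only $s\le n$, each $r_k\le n$, and finitely many $\alpha_k\in A$), so the regrouped series $\xtf(T)=\sum_{n\ge 0}a_n T^n$ has well-defined coefficients $a_n\in\cB(\cH_S\otimes\cH_E)$. Analyticity of the $B_\alpha$ together with the textbook Dyson bound $\|\xtf(T)\|\le\exp\int_0^T\|\Xtf(t)\|\,dt$ shows that $\xtf$ is analytic in $T$ on a disc of positive radius and that its Taylor expansion there coincides with $\sum a_n T^n$.

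Now invoke the hypothesis: for every $n\le N$, each contribution to $a_n$ with $Y_{\alpha_s}\cdots Y_{\alpha_1}\not\propto I_S$ is killed by the vanishing of $\cF^{\vec{r}}_{\vec{\alpha}}$, so only those triples with $Y_{\alpha_s}\cdots Y_{\alpha_1}=c_{\vec{\alpha}}I_S$ for some scalar $c_{\vec{\alpha}}$ survive. Collecting them yields
\begin{equation*}
\sum_{n=0}^{N}a_n T^n=I_S\otimes \tilde{B}_0,\quad \tilde{B}_0=\sum_{n=0}^{N}T^n\!\!\sum_{\substack{(s,\vec{\alpha},\vec{r}):\,s+\sum r_k=n\\ Y_{\alpha_s}\cdots Y_{\alpha_1}=c_{\vec{\alpha}}I_S}}\!\!\!c_{\vec{\alpha}}\Bigl(\prod_{k=1}^{s}b_{\alpha_k,r_k}\Bigr)\cF^{\vec{r}}_{\vec{\alpha}}.
\end{equation*}
Taylor's theorem applied to the analytic $\xtf$ (equivalently, a direct tail estimate using $|\cF^{\vec{r}}_{\vec{\alpha}}|\le 1/s!$, finiteness of $\|Y_\alpha\|$, and geometric decay of $\|b_{\alpha,r}\|$) then gives $\xtf(T)-I_S\otimes\tilde{B}_0=\sum_{n>N}a_n T^n=O(T^{N+1})$, which is the claim after scalar multiplication by $e^{i\varphi}$.

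The only nontrivial point is passing from the formal identification of the low-order Taylor coefficients as $I_S\otimes(\cdot)$ to the quantitative operator-norm remainder bound. This requires analyticity of the $B_\alpha$ (so that the regrouped Dyson series converges to an analytic function of $T$ on a definite disc) together with operator-norm boundedness of the $Y_\alpha$ (so that the tail $\sum_{n>N}\|a_n\|T^n$ admits a geometric majorant), both of which are built into the qubit and bosonic-symplectic setups in which this theorem is used in the main text.
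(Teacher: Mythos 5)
Your proposal is correct and follows essentially the same route as the paper: the paper derives the toggling-frame Dyson expansion~\eqref{eq:Dyson_expansion_toggling} and reads the theorem off from it directly, which is exactly what you do (you merely make explicit the regrouping by total $T$-degree, the phase relating $\xres(T)$ and $\xtf(T)$, and the tail estimate via $|\cF^{\vec{r}}_{\vec{\alpha}}|\le 1/s!$ and analyticity of the $B_\alpha$, all of which the paper leaves implicit). No gaps.
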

The constant in $O(T^{N+1})$ will depend on the norm of the original environment operators $B_\alpha$ and the resulting environment operator $\tilde{B}$ will include the phase relating~$\xtf(T)$ and~$\xres(T)$.
We will also require a weaker form of decoupling, where the system-environment interaction is reduced to a particular form (specified by a single basis element $Y_\gamma\in \mathfrak{g}$) up to order~$N$. This also follows immediately from~\eqref{eq:Dyson_expansion_toggling}. 
\begin{corollary}[Modified decoupling/homogenization criterion]\label{thm:mod_decoupling_criterion}
	Let $\gamma\in A$ be fixed. 
	For all $s\in \mathbb{N}$, $r_1,\ldots,r_s\in \mathbb{N}_0$, and $\alpha_1, \ldots, \alpha_s\in A$ assume that
	\begin{equation}
	\cF_{\vec{\alpha}}^{\vec{r}}(\{F_\alpha\}) =0\quad \textrm{ if} \quad
	\begin{cases}
	s+\sum_{k=1}^s r_k\le N\textrm{ and }\\
	Y_{\alpha_s}\cdots Y_{\alpha_1} \not\propto \{I_S,Y_\gamma\} \ .
	\end{cases}	\label{eq:conditionsgeneralmod}
	\end{equation}
	Then there are operators $\tilde{B}_1,\tilde{B}_2\in \mathcal{B}(\cH_E)$ such that 
	\begin{equation}
	\|\xres(T)- I_{S}\otimes\tilde{B}_1-Y_\gamma\otimes\tilde{B}_2\|= O(T^{N+1})\ .
	\end{equation}
\end{corollary}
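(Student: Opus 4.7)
The plan is to mirror the proof of Theorem~\ref{thm:decoupling_criterion} but weaken the characterization of the surviving ``trivial'' system operators so that both $I_S$ and $Y_\gamma$ are permitted. Concretely, I start from the Dyson expansion~\eqref{eq:Dyson_expansion_toggling} of the toggling-frame evolution $\xtf(T)$, since by construction $\xres(T)$ and $\xtf(T)$ differ only by multiplication with the total control pulse $\xcontrol(T)$, which by assumption is a scalar multiple of $I_S\otimes I_E$ (cf.\ \eqref{eq:prodpulses} and the discussion after \eqref{eq:generatorxX}). Thus it suffices to establish the approximation for $\xtf(T)$ and then absorb the overall phase into the environment operators $\tilde B_1, \tilde B_2$.

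Next, I would partition the double sum in~\eqref{eq:Dyson_expansion_toggling} according to the order $s+\sum_{k}r_k$ and the structure of the system factor $Y_{\alpha_s}\cdots Y_{\alpha_1}$. Write
\begin{equation}
\xtf(T)=\Sigma_{\le N}+\Sigma_{>N},
\end{equation}
where $\Sigma_{>N}$ collects all multi-indices $(\vec\alpha,\vec r)$ with $s+\sum_k r_k\ge N+1$, and $\Sigma_{\le N}$ the remainder. By analyticity of the $B_\alpha$ and boundedness of the $Y_\alpha$ on finite-dimensional (or operator-norm bounded) subspaces relevant here, $\|\Sigma_{>N}\|=O(T^{N+1})$, exactly as in the proof of Theorem~\ref{thm:decoupling_criterion}. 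For $\Sigma_{\le N}$, split the index set into three classes: those with $Y_{\alpha_s}\cdots Y_{\alpha_1}\propto I_S$, those with $Y_{\alpha_s}\cdots Y_{\alpha_1}\propto Y_\gamma$, and all others. The hypothesis~\eqref{eq:conditionsgeneralmod} forces every coefficient $\cF^{\vec r}_{\vec\alpha}$ in the third class to vanish, so that class contributes nothing.

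It remains to repackage the first two classes. For each surviving multi-index, write $Y_{\alpha_s}\cdots Y_{\alpha_1}=c_{\vec\alpha}\,I_S$ or $c_{\vec\alpha}\,Y_\gamma$ for the appropriate scalar, and pull the scalar into the environment factor. This produces
\begin{equation}
\Sigma_{\le N}=I_S\otimes \tilde B_1^{(\le N)}+Y_\gamma\otimes \tilde B_2^{(\le N)}
\end{equation}
for environment operators $\tilde B_1^{(\le N)},\tilde B_2^{(\le N)}\in\cB(\cH_E)$ defined by the corresponding partial Dyson sums. Combining with the $O(T^{N+1})$ bound on $\Sigma_{>N}$ (which can be written as $I_S\otimes R_1+Y_\gamma\otimes R_2+R_3$ with $\|R_i\|=O(T^{N+1})$, absorbing the $I_S$- and $Y_\gamma$-proportional pieces into $\tilde B_1,\tilde B_2$) and with the global phase from $\xcontrol(T)$ yields the claimed bound on $\xres(T)$.

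The main technical nuisance, rather than a real obstacle, is the bookkeeping in the final step: one must verify that the partial Dyson sums $\tilde B_1^{(\le N)}$ and $\tilde B_2^{(\le N)}$ are well-defined bounded operators (which is immediate from analyticity of the $B_\alpha$), and that the error terms of order $>N$ can be consistently split off as either proportional to $I_S$, proportional to $Y_\gamma$, or of operator norm $O(T^{N+1})$ on their own. This parallels exactly the argument already used to conclude Theorem~\ref{thm:decoupling_criterion}, so no new ingredient beyond the broadened ``allowed system-factor'' set $\{I_S,Y_\gamma\}$ is needed.
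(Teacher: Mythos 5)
Your proposal is correct and follows exactly the route the paper intends: the paper's own ``proof'' of Corollary~\ref{thm:mod_decoupling_criterion} is the single remark that it ``follows immediately from~\eqref{eq:Dyson_expansion_toggling},'' i.e.\ precisely the Dyson-series splitting by order and by system-factor class that you carry out, with the only change from Theorem~\ref{thm:decoupling_criterion} being the enlarged set $\{I_S,Y_\gamma\}$ of allowed system factors. Your additional bookkeeping (absorbing the scalars $c_{\vec\alpha}$ and the control phase into $\tilde B_1,\tilde B_2$) is a faithful elaboration of the same argument, not a different one.
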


We emphasize that the scalars $\cF^{r_1,\ldots,r_s}_{\alpha_1,\ldots,\alpha_s}$ depend on the family of functions $\{F_{\alpha}\}_{\alpha\in A}$ which itself is defined by the tuple $(\{t_\lambda\}_\lambda,\beta)$ in Eq.~\eqref{eq:falphafamilydef}. 
Hence the decoupling properties of a given pulse sequence are captured by the corresponding functions~$F_\alpha$.

In Section~\ref{app:udd}, we discuss two known examples of functions~$F_\alpha$ where the conditions of Theorem~\ref{thm:decoupling_criterion} are met, the UDD and the NUDD scheme. 

\subsection{Qubit decoupling revisited: UDD and NUDD schemes}\label{app:udd}
Let us consider a system $\cH_S=(\mathbb{C}^2)^{\otimes m+1}$ of $m+1$ qubits -- labeled from 0 to $m$ -- that interacts with an environment $\cH_E$
via the original Hamiltonian 
\begin{align}\label{eq:NUDDHamiltonianorig}
&\Horig(t)=\sum_{\alpha\in (\mathbb{Z}_2^2)^{m+1}} \sigma_{\alpha} \otimes B_{\alpha} (t)\ 
\end{align} where $\sigma_\alpha$ denote multi-qubit Pauli operators 
for any sequence $\alpha=(a_0,a_1,\ldots,a_m)\in (\mathbb{Z}_2^2)^{m+1}$ such that $a_i\in \mathbb{Z}_2^2$ for $i=0,\ldots,m$. The environment operators $B_{\alpha}(t)$ are assumed to be time-dependent and analytic with series expansion that satisfies Eq.~\eqref{eq:analyticexpansionmodeops}.

The Hamiltonian~$\Horig(t)$ falls into the above framework. In particular, the adjoint action of Lie group elements~\eqref{eq:adxinvidentity} is 
\begin{align}\label{eq:rel_Salpha_NUDD_appendix}
\sigma_{\beta}^{-1}\sigma_{\alpha} \sigma_{\beta}=(-1)^{\langle\alpha, \beta\rangle} \sigma_{\alpha} \quad \textrm{for } \alpha,\beta\in (\mathbb{Z}_2^2)^{m+1}.
\end{align}

To eliminate decoherence induced by~\eqref{eq:NUDDHamiltonianorig}, we consider the nested Uhrig DD (NUDD) sequence~\cite{mukhtar10,wangliu11,jiang}. Historically this scheme was deduced from the Uhrig DD scheme~\cite{uhrig}. First, a single qubit was considered, where the quadratic DD (QDD) sequence was introduced by West et al.~\cite{westetal10} to generalize the Uhrig sequence to arbitrary system-environment interactions; proofs of its validity were subsequently provided by Wang and Liu~\cite{wangliu11} as well as Kuo and Lidar~\cite{kuolidar11}. 
Mukhtar et al.~\cite{mukhtar10} extended this scheme to protect unknown two-qubit states from decoherence. The generalization to $n$ qubits, i.e., the NUDD sequence, was shown to be universal in~\cite{wangliu11} where the authors considered even (but potentially different) decoupling orders in every nesting level.
An alternate proof~\cite{jiang} showed universality of the NUDD sequence for arbitrary decoupling order (even or odd, but the same in every level).
We note that this discussion of the (multi-)qubit DD is by no means exhaustive, but covers those aspects
which are directly pertinent to our work. For further work e.g., on particular noise models (such as~\cite{violalloyd98,violaknilllloyd99,shiokawalidar04,santosviola05,pasiniuhrigpra10}) or the discussion of finite-width pulses, we refer to the literature. 

Let us introduce the label $\lambda\in \{0,1,\ldots,N\}^{2(m+1)}$ as
\begin{align}
\lambda=(\ell_0,\ldots,\ell_{2m+1}) \quad\text{ where } \quad \ell_k\in \{0,1,\ldots,N\} \ .
\end{align} 
In order to define the pulse times $t_\lambda=T\Delta_\lambda$, let
\begin{align}\label{eq:Uhrigtimesapp}
\Delta_j=\sin^2\frac{j \pi }{2(N+1)}
\end{align} for $j= 1,2,\ldots,N$ be the Uhrig DD times. The intuition behind $\Delta_\lambda$ being an $(2m+2)$-fold concatenation of the Uhrig pulse times $\Delta_j$ is the following: First, for times $\Delta_j$, we divide $[0,1]$ into $N+1$ intervals $\Delta_j$; this will be called outermost level. Then $\Delta_{(\ell_0,\ell_1)}$ on the next level, is obtained by subdividing each of these $N+1$ intervals again into $N+1$ parts by $\Delta_j$. This concatenating procedure is recursively repeated.

Formally, we set $\Delta_0=0$, $\Delta_{N+1}=1$ and starting with $d_j=\Delta_j$,
recursively introduce the quantity
\begin{equation}\label{eq:def_dlambda}
d_{(\ell_0,\ell_1,\ldots,\ell_k)}= \Delta_{\ell_k}+(\Delta_{\ell_k+1}-\Delta_{\ell_k})d_{(\ell_0,\ell_1,\ldots, \ell_{k-1})}
\end{equation}
for $\ell_0,\ell_1,\ldots,\ell_k\in\{0,1,\ldots,N+1\}$ and $k=1,\ldots,2m+1$. Recalling Eq.~\eqref{eq:def_dlambda} the pulse times are 
\begin{equation}\label{eq:NUDD_times}
	t^{\textsf{NUDD}}_\lambda=T\Delta_\lambda
\end{equation} where for $\lambda=(\ell_0,\ldots,\ell_{2m+1})$ we set
\begin{align}
    \Delta_\lambda=\begin{cases}
        1 &\textrm{if } \lambda=(0,\ldots,0)\\
        d_{\lambda}& \textrm{if } \ell_0\neq 0\\
        d_{\lambda'(r)} &\textrm{if } \ell_r\neq0\textrm{ for }r\in \{1,\ldots,2m+1\}\\
        & \quad\textrm{and }\ell_0=\cdots=\ell_{r-1}= 0
    \end{cases}\quad \label{eq:NUDD_Deltalambda}
\end{align} where $\lambda'(r)$ differs from $\lambda$ only in the components $\ell'_{r-1}=N+1$ and $\ell'_r=\ell_r -1$.

If $N$ is even, the NUDD control pulses are
(up to factors~$\pm 1$ or $\pm i$) given by
\begin{align}\label{eq:NUDDpulesNeven}
U^{\textsf{NUDD}}_\lambda=
\begin{cases}
\mathsf{id} & \textrm{if } \lambda=(0,\ldots,0)\\
(\sigma_z)_k & \textrm{if } \ell_{2k}\neq 0\textrm{ for }k\in \{0,\ldots,m\}\\
  & \quad\textrm{and }\ell_i=0 \textrm{ for }i\le2k-1\\
(\sigma_x)_k & \textrm{if } \ell_{2k+1}\neq 0 \textrm{ for }k\in \{0,\ldots,m\}\\
  & \quad\textrm{and } \ell_i=0 \textrm{ for }i\le2k
\end{cases}\quad
\end{align}
where $(\sigma_{x,y,z})_k=I^{\otimes k}\otimes \sigma_{x,y,z}\otimes I^{\otimes m-k} $ and $k=0,\ldots,m$. We note that a complex phase of the unitary pulse operators does not have any effect on the decoupling analysis, since we are only interested in terms of the form $( U^{\textsf{NUDD}}_\lambda)^{-1} \sigma_\alpha U^{\textsf{NUDD}}_\lambda$. In slight abuse of notation, we will therefore omit these phases in the definition of the pulses. If~$N$ is odd, then the pulses are defined slightly differently, by taking~\eqref{eq:NUDDpulesNeven} and replacing
\begin{align}\label{eq:NUDDpulesNodd}
\begin{matrix}
\mathsf{id}&\mapsto&\prod_{j=0}^{m}(\sigma_y)_j\\
(\sigma_z)_k&\mapsto&(\sigma_z)_k \prod_{j=1}^{k-1}(\sigma_y)_j\\
(\sigma_x)_k&\mapsto&\prod_{j=0}^{k}(\sigma_y)_j
\end{matrix} \ .
\end{align}

The toggling frame Hamiltonian is given by
\begin{equation}
\Htf(t)=\sum_{\alpha} F^{\mathsf{NUDD}}_{\alpha}(t/T)S_\alpha\otimes B_\alpha(t)
\end{equation}
for the family of functions $F^{\mathsf{NUDD}}_{\alpha}$ defined below. Due to $\sigma_x^2=\sigma_y^2=\sigma_z^2=I$ and the commutation relations between Pauli matrices, we have $\sigma_{\alpha_1}\cdots \sigma_{\alpha_s}\propto \sigma_{\oplus_{k=1}^s\alpha_k}$ up to factors $\pm1$ or $\pm i$. Then the order $N$ decoupling property of the NUDD sequence follows from Theorem~\ref{thm:decoupling_criterion} and the statement of the following lemma.
\begin{lemma}[\cite{jiang}]\label{lem:NUDD_decoupling_condition}
	For $N\in\mathbb{N}$, $\lambda\in \{0,1\ldots,N\}^{2m+2}$ and $\alpha\in (\mathbb{Z}^2_2)^{m+1}$, let $F^{\mathsf{NUDD}}_{\alpha}:[0,1]\to \{-1,1\}$ be defined as
	\begin{equation}\label{eq:def_F_NUDD}
	F^{\mathsf{NUDD}}_{\alpha}(\tau)= (-1)^{{\alpha} \cdot\lambda} \quad \text{ for } \;\tau  \in (\Delta_\lambda,\Delta_{\lambda_+}]
	\end{equation}
	where $\lambda_+$ labels the pulse following the one with label~$\lambda$, $\Delta_\lambda$ is defined in~\eqref{eq:NUDD_Deltalambda} and~$\cdot$ denotes the scalar product. 
	Then for all 
    $s\in \mathbb{N}$, $\vec{r}=(r_1,\ldots,r_s)\in \mathbb{N}^s_0$, and $\vec{\alpha}=(\alpha_1, \ldots, \alpha_s)\in \big((\mathbb{Z}^2_2)^{m+1}\big)^s$,
	\begin{align}
        \cF_{\vec{\alpha}}^{\vec{r}}(\{F^\mathsf{NUDD}_\alpha\}) =0\quad\textrm{if }
    \begin{cases}
        s+\sum_{j=1}^s r_j\le N\textrm{ and }\\
    \bigoplus_{k=1}^s \alpha_k \neq (0,\ldots,0)
    \end{cases}
    \end{align} where $\oplus$ denotes entriwise addition modulo two.
\end{lemma}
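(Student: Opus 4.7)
My plan is induction on the nesting depth $m$, with the single-qubit UDD integral identity as the base case. For $m=0$, the claim reduces to showing that the sign function $\sigma(\tau)$ switching at the Uhrig times~\eqref{eq:Uhrigtimesapp} satisfies $\cF^{\vec{r}}_{\vec{\gamma}}(\sigma)=0$ whenever $s+\sum_k r_k\le N$ and $\bigoplus_k\gamma_k=1$. This is the classical Uhrig identity~\cite{uhrig2008exact,yang2008universality,kuolidar11}, proven via the substitution $\tau=\sin^2(\theta/2)$ followed by a residue computation whose pole-order bounds correspond exactly to the degree budget; it is the same identity already invoked in the proof of Theorem~\ref{thm:decoupling}.

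For the inductive step I exploit two structural properties of the NUDD construction. First, writing $\alpha=(a_0,\ldots,a_m)\in(\mathbb{Z}_2^2)^{m+1}$ and splitting off the outermost bit $a_{m,2}$ (i.e., the second component of $a_m$), the inner product factorizes as
\begin{equation}
(-1)^{\alpha\cdot\lambda}=(-1)^{a_{m,2}\,\ell_{2m+1}}\,(-1)^{\tilde{\alpha}\cdot\tilde{\lambda}},
\end{equation}
where $\tilde{\alpha}$ denotes $\alpha$ with $a_{m,2}$ zeroed out and $\tilde{\lambda}=(\ell_0,\ldots,\ell_{2m})$. Second, the recursion~\eqref{eq:def_dlambda} implies that the NUDD times inside each outer subinterval $I_\ell=(\Delta_\ell,\Delta_{\ell+1}]$ are exactly the affine image of the depth-$m$ NUDD times on $[0,1]$. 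Consequently, the restriction of $F^{\mathsf{NUDD}}_\alpha$ to $I_\ell$, rescaled affinely to $[0,1]$, equals $(-1)^{a_{m,2}\ell}$ times the depth-$m$ NUDD sign function evaluated at $\tilde{\alpha}$ and the rescaled argument.

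With this in hand, I decompose $\cF^{\vec{r}}_{\vec{\alpha}}$ according to which outer subinterval each $\tau_k$ lies in; since $\tau_1\le\cdots\le\tau_s$, admissible placements are indexed by weakly increasing sequences of outer labels. Inside each subinterval I apply $\tau=\Delta_\ell+(\Delta_{\ell+1}-\Delta_\ell)\tilde{\tau}$ and expand each $\tau^{r_k}$ binomially into a polynomial in $\tilde{\tau}$ of degree at most $r_k$. Each resulting term factorizes as (i) a depth-$m$ inner NUDD integral $\cF^{\vec{r}'}_{\vec{\alpha}'}$ on each subinterval visited by more than one $\tau_k$, with inner degrees $r'_k\le r_k$; and (ii) an outer iterated integral over the distinct outer labels visited, whose integrand is $(-1)^{a_{m,2}\ell}$ times a polynomial of controlled degree. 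If the total outermost parity $\bigoplus_k a_{m,2}^{(k)}$ is nonzero, the outer integral is itself a single-level UDD integral with odd parity within the degree budget and vanishes by the base case; otherwise $\bigoplus_k\tilde{\alpha}_k\ne 0$ by hypothesis, and every inner integral vanishes by the inductive hypothesis.

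The main obstacle I anticipate is the combinatorial bookkeeping in this splitting step: for each partition of the $\tau_k$'s among outer subintervals one must identify which factors belong to the ``outer'' versus ``inner'' integrals, propagate the degree budget through the binomial expansion (noting that $\sum_k r'_k\le\sum_k r_k$ always), and verify that both the outer UDD parity/degree conditions and the inner NUDD parity/degree conditions are simultaneously met in their respective cases. This is precisely the technical heart of the argument carried out in~\cite{jiang}.
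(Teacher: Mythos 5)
The paper does not prove this lemma at all: it is stated with an explicit attribution to~\cite{jiang} and imported as a known result (the surrounding text says the $N$-th order decoupling property of NUDD ``follows from Theorem~\ref{thm:decoupling_criterion} and the statement of the following lemma''). So there is no in-paper proof to compare against; your sketch is an outline of the standard literature argument, and you yourself defer ``the technical heart'' to the same reference the paper cites. As a reconstruction of that argument your skeleton is the right one -- induction on nesting levels, the affine self-similarity of the times from~\eqref{eq:def_dlambda}, factorization of $(-1)^{\alpha\cdot\lambda}$ into an outermost sign and an inner depth-$m$ sign, and a block decomposition of the ordered integral by outer subinterval.

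Two points in your write-up would need repair before this could stand as a proof. First, in the inner case you claim ``every inner integral vanishes by the inductive hypothesis''; that is false as stated, since a block whose partial parity $\bigoplus_{k\in B}\tilde{\alpha}_k$ is zero contributes a generically nonvanishing inner integral. What saves the argument is that when the total inner parity is nonzero at least one block must have nonzero partial parity, and one vanishing factor in the product suffices; moreover the case dichotomy (outer factor vanishes versus some inner factor vanishes) has to be checked term by term over block structures, not once globally. Second, the ``outer integral'' is not literally a single-level UDD integral: after performing the inner integrals it is an ordered \emph{sum} over subinterval labels $\ell_1<\cdots<\ell_b$ of signs $(-1)^{\epsilon_B\ell_B}$ times polynomials in $\Delta_{\ell_B},\Delta_{\ell_B+1}$. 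Reducing this discrete object to the continuous UDD identity (or proving a discrete/mixed analogue with the correct degree budget, tracking the degree contributions $s_B+\sum_{k\in B}r_k'$ from the rescaled inner integrals) is exactly the bookkeeping you flag as the anticipated obstacle; it is nontrivial and is where the proofs in~\cite{jiang,kuolidar11} spend most of their effort.
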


Here we present another specific family of functions that satisfies the conditions of Theorem~\ref{thm:decoupling_criterion}. This is associated with the UDD sequence introduced by Uhrig~\cite{uhrig}. It can be regarded as special case of the NUDD sequence (one qubit and one concatenation level). Here $\Horig(t)=I\otimes B_0(t)+\sigma_z\otimes B_1(t)$ -- involving only $\sigma_z$-Pauli operators on the system. The pulses $\sigma_x$ are applied at times $t^\textsf{UDD}_j$ from Eq.~\eqref{eq:Uhrigtimes}.
One can show that the toggling frame Hamiltonian is $\Htf(t)=F_0^{\mathsf{UDD}}(t/T) I\otimes B_0(t)+ F_1^{\mathsf{UDD}}(t/T)\sigma_z\otimes B_1(t)$ for the functions $F_0^{\mathsf{UDD}},F_1^{\mathsf{UDD}}$ defined below. The universality of the UDD scheme, proved in~\cite{YangLiu08} (see also~\cite{uhriglidar}), then relies on the following lemma.
\begin{lemma}\label{lem:UDDfunctions}
	For $N\in\mathbb{N}$ and $\alpha\in \mathbb{Z}_2$, let $F_\alpha^{\mathsf{UDD}}:[0,1]\rightarrow\{-1,1\}$ be defined by
	\begin{align}
	F^{\mathsf{UDD}}_0(\tau)&=1 		&&\textrm{ for all }\tau\in [0,1]\qquad\textrm{ and }\label{eq:f0def}\\
	F^{\mathsf{UDD}}_1(\tau)&=(-1)^j	&&\textrm{ for all }\tau\in (\Delta_j,\Delta_{j+1}]\ ,\label{eq:f1def}
	\end{align} where $\Delta_j$ is defined by Eq.~\eqref{eq:Uhrigtimesapp}. 
	Then for all 
    $s\in \mathbb{N}$, $\vec{r}=(r_1,\ldots,r_s)\in \mathbb{N}^s_0$, and $\vec{\alpha}=(\alpha_1, \ldots, \alpha_s)\in \mathbb{Z}^s_2$,
	\begin{align}
        \cF_{\vec{\alpha}}^{\vec{r}}(\{F^\mathsf{UDD}_\alpha\}) =0\quad\textrm{if }
    \begin{cases}
        s+\sum_{j=1}^s r_j\le N\textrm{ and }\\
    \bigoplus_{k=1}^s \alpha_k \neq 0
    \end{cases}	
    \end{align} where $\oplus$ denotes addition modulo two.
\end{lemma}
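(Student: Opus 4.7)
The plan is to prove Lemma~\ref{lem:UDDfunctions} by the classical argument for the universality of single-qubit UDD, originally due to Yang and Liu~\cite{yang2008universality} (see also~\cite{uhriglidar,kuolidar11} for alternative presentations). I outline the main ideas rather than grinding through the full technical machinery.

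First I would apply the Uhrig substitution $\tau_k=\sin^2(\phi_k/2)$, which maps the non-uniform Uhrig partition $\{\Delta_j\}_{j=0}^{N+1}$ of $[0,1]$ to the equispaced partition $\{j\pi/(N+1)\}_{j=0}^{N+1}$ of $[0,\pi]$. Under this map,
\begin{equation}
F_1^{\mathsf{UDD}}\bigl(\sin^2(\phi/2)\bigr)=\mathrm{sign}\bigl(\sin((N+1)\phi)\bigr),
\end{equation}
while the polynomial weight $\tau^{r}$ combined with the Jacobian $d\tau=\tfrac{1}{2}\sin\phi\,d\phi$ becomes a real trigonometric polynomial in $\phi$ of degree at most $r+1$. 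Since $F_0^{\mathsf{UDD}}\equiv 1$, positions $k$ with $\alpha_k=0$ contribute only such a trigonometric polynomial, while positions with $\alpha_k=1$ contribute in addition a sign factor. Writing $\ell=|\{k:\alpha_k=1\}|$, the constraint $\bigoplus_k\alpha_k\neq 0$ forces $\ell$ to be odd.

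Next I would combine the degree bound $s+\sum_k r_k\le N$ with a Fourier analysis of the sign factors. The expansion
\begin{equation}
\mathrm{sign}\bigl(\sin((N+1)\phi)\bigr)=\frac{4}{\pi}\sum_{p=0}^\infty\frac{\sin\bigl((2p+1)(N+1)\phi\bigr)}{2p+1}
\end{equation}
contains only Fourier frequencies that are odd multiples of $N+1$. The product of the $\ell$ sign factors therefore has Fourier support on frequencies of absolute value at least $N+1$ -- and, since $\ell$ is odd, contains no zero-frequency component. Meanwhile, the trigonometric-polynomial part has total degree at most $\sum_k(r_k+1)=s+\sum_k r_k\le N$ and hence only Fourier content of absolute value $\leq N$. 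Orthogonality on $[0,\pi]$ of the trigonometric basis then implies that every one-dimensional integral of the form $\int_0^\pi\mathrm{sign}(\sin((N+1)\phi))\, q(\phi)\,d\phi$, with $q$ a trigonometric polynomial of degree $\leq N$, vanishes.

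The main obstacle is that $\cF^{\vec{r}}_{\vec\alpha}$ is an iterated integral over the ordered simplex $\{0<\tau_1<\cdots<\tau_s<1\}$ rather than a product integral, so coordinate-wise Fourier orthogonality does not apply directly. The remedy I would follow along the lines of~\cite{yang2008universality,kuolidar11} is an induction on $s$ using integration by parts to peel off one $\tau_k$-integration at a time, reducing the nested integral to a linear combination of products of one-dimensional moment integrals of the form $\int_0^1 F_1^{\mathsf{UDD}}(\tau)\tau^{r'}\,d\tau$ with $r'\leq N-1$; these moments vanish by the orthogonality argument above, and the odd-parity of $\ell$ guarantees that at least one such sign-weighted moment appears in every term of the reduction. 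Together these cancellations yield $\cF^{\vec{r}}_{\vec\alpha}=0$ under the stated hypotheses.
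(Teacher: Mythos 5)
First, a point of reference: the paper does not actually prove Lemma~\ref{lem:UDDfunctions} --- it is quoted from Yang and Liu~\cite{YangLiu08} (see also~\cite{uhriglidar}), so there is no in-paper argument to compare against and you are attempting more than the authors do. Your setup matches the standard proof's starting point: the substitution $\tau=\sin^2(\phi/2)$ maps the Uhrig partition to the equispaced one, $F^{\mathsf{UDD}}_1$ becomes $\mathrm{sign}(\sin((N+1)\phi))$ with Fourier support on odd multiples of $N+1$, the weight $\tau^{r}\,d\tau$ becomes a sine polynomial of degree $r+1$, and your one-dimensional orthogonality argument correctly disposes of the case $s=1$.

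The gap is the induction step. The claim that integration by parts reduces $\cF^{\vec r}_{\vec\alpha}$ to a linear combination of \emph{products} of one-dimensional moments $\int_0^1F^{\mathsf{UDD}}_1(\tau)\tau^{r'}\,d\tau$ is false in general. Take $s=3$, $\vec\alpha=(1,1,1)$, $\vec r=(0,0,1)$ and set $G(\tau)=\int_0^\tau F^{\mathsf{UDD}}_1(u)\,du$. The two inner integrations give $G(\tau_3)^2/2$, so
\begin{equation}
\cF^{(0,0,1)}_{(1,1,1)}=\frac12\int_0^1\tau\,\frac{d}{d\tau}\frac{G(\tau)^3}{3}\,d\tau=\frac{G(1)^3}{6}-\frac16\int_0^1G(\tau)^3\,d\tau\ ,
\end{equation}
and integrating $\int_0^1G^3\,d\tau$ by parts merely reproduces the same quantity (it equals $6\,\cF^{(0,0,0,0)}_{(1,1,1,0)}$, another integral of the same family): it is not a polynomial in the one-dimensional moments, so the one-dimensional orthogonality statement cannot finish the job. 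What the Yang--Liu proof actually does is a structural induction on the nesting depth carried out entirely in the angle variable: writing $G_p(\phi)$ for the $p$-fold inner integral, one shows that its Fourier support lies within distance $d_p=\sum_{k\le p}(r_k+1)$ of the even (respectively odd) multiples of $N+1$ according to whether $\bigoplus_{k\le p}\alpha_k$ equals $0$ (respectively $1$). Because $d_s\le N<N+1$, the integrand at each step has vanishing mean --- automatically when it is odd, and by this support bound when it is even --- so no secular terms arise, every $G_p$ remains $2\pi$-periodic, and the final evaluation $G_s(\pi)$ vanishes since $G_s$ is an odd periodic function precisely when $\bigoplus_k\alpha_k=1$. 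You would need to establish and propagate this structural statement (or follow the alternative routes of~\cite{uhriglidar,kuolidar11}); the ``products of one-dimensional moments'' shortcut does not exist.
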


We reuse this lemma in the context of bosonic decoupling.

\section{Proof of decoupling}\label{app:proofdec}

We consider decoupling of matrices of the form
\begin{equation}\label{eq:Xbosdecapp}
\Xorig(t)=\begin{pmatrix}
X_{SS}(t) &X_{SE}(t)\\
X_{ES}(t)& X_{EE}(t)
\end{pmatrix}
\end{equation}
where the functions $X_{BC}(t)$ for $B,C\in \{S,E\}$ are analytic with series expansions
\begin{equation}
X_{BC}(t)=\sum_{r=0}^\infty X_{BC,r} t^{r}\quad\textrm{for } B,C\in \{S,E\}\ .\label{eq:analyticseriesexpansion}
\end{equation}
Given a bosonic pulse sequence with pulses~$-I_S$ applied at for now not further specified times $t_j$, we analyze the Dyson expansion of the toggling frame evolution 
\begin{equation}\label{eq:StfDyson}
\Stf(T)=\sum_{s=0}^\infty \int_0^T  dt_s\cdots \int_0^{t_{2}}dt_1\  
\prod_{k=1}^s\Xtf(t_k) \ . 
\end{equation} To show $N$-th order decoupling, we prove that $\Stf(T)$ is of direct sum form $S_S\oplus S_E$ (for matrices $S_S\in \mathbb{R}^{2n_S\times 2n_S}$ and $S_E\in \mathbb{R}^{2n_E\times 2n_E}$) up to order $N$ in $T$, i.e.,~that the off-diagonal terms of~\eqref{eq:StfDyson} --$\left(\Stf(T)\right)_{SE}$ and $\left(\Stf(T)\right)_{ES}$ -- vanish up to $O(T^{N+1})$.

\begin{lemma}\label{lem:maindecouplingdirectsum}
	Consider a pulse sequence defined by a piecewise constant function $\sigma:[0,1]\rightarrow \{-1,1\}$ that satisfies $\sigma(0)=1$ and changes its sign when the pulse $-I_S$ is applied. Suppose for all~$s\in\mathbb{N}$, $r_1,\ldots,r_s\in\mathbb{N}_0$ and $\gamma_1,\ldots,\gamma_s\in \mathbb{Z}_2$ the function~$\sigma$ satisfies
	\begin{equation}
	\cF_{\gamma_1,\ldots,\gamma_s}^{r_1,\ldots,r_s}(\sigma)=0\quad
	\textrm{ if }\ \begin{cases}
	s+\sum_{k=1}^s r_k\leq N\textrm{ and }\\
	\bigoplus_{k=1}^s \gamma_k=1\ 
	\end{cases}	\label{eq:decouplingsufficientcond}
	\end{equation}
	where $\oplus$ denotes addition modulo 2 and we have defined
	\begin{equation}\label{eq:scalarscFbosdec}
	\cF_{\gamma_1,\ldots,\gamma_s}^{r_1,\ldots,r_s}(\sigma)=\int_0^1  d\tau_s\cdots \int_0^{\tau_{2}}d\tau_1 \prod_{k=1}^s\sigma(\tau_k)^{\gamma_k}\tau_k^{r_k}\ .
	\end{equation}
	Then the toggling frame evolution is of the form
	\begin{equation}
	\|\Stf(T)- S_{S}\oplus S_{E}\|=O(T^{N+1})\label{eq:errortermstfsse}
	\end{equation}
	for operators $S_{S}$ and $S_{E}$ acting on the system and environment only. 
\end{lemma}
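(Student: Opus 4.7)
\bigskip

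\textbf{Proof plan for Lemma~\ref{lem:maindecouplingdirectsum}.} The plan is to expand the toggling frame Dyson series~\eqref{eq:StfDyson} and isolate the block-off-diagonal contribution, which is precisely what~\eqref{eq:decouplingsufficientcond} is designed to annihilate through order $N$ in $T$. To begin, I would write the toggling frame generator~\eqref{eq:xtildegeneralop} as $\Xtf(t)=D(t)+\sigma(t/T)\,O(t)$, where $D(t)=X_{SS}(t)\oplus X_{EE}(t)$ is block-diagonal and $O(t)$ contains only the $X_{SE}(t)$ and $X_{ES}(t)$ blocks. The key structural observation is that a product of such block matrices is block-diagonal if and only if an even number of the factors are block-off-diagonal, and block-off-diagonal otherwise.

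Using this decomposition, I would expand each factor in $\prod_{k=1}^{s}\Xtf(t_k)$ into its diagonal and off-diagonal pieces, producing
\begin{equation}
\prod_{k=1}^{s}\Xtf(t_k)=\sum_{\vec{\gamma}\in\mathbb{Z}_2^{s}}\prod_{k=1}^{s}\sigma(t_k/T)^{\gamma_k}M_{\gamma_k}(t_k),
\end{equation}
with $M_0(t)=D(t)$ and $M_1(t)=O(t)$. By the parity observation above, the block-off-diagonal part of this product is exactly the subsum over $\vec{\gamma}$ with $\bigoplus_{k=1}^{s}\gamma_k=1$. Substituting the Taylor series~\eqref{eq:analyticseriesexpansion} of the entries $X_{BC}(t)$ into $M_{\gamma_k}(t_k)$ and changing variables $\tau_k=t_k/T$ in the nested time-ordered integral of~\eqref{eq:StfDyson}, the $s$-th order off-diagonal contribution becomes
\begin{equation}
\sum_{\substack{\vec{\gamma}\in\mathbb{Z}_2^{s}\\ \oplus_k\gamma_k=1}}\ \sum_{\vec{r}\in\mathbb{N}_0^{s}} T^{s+\sum_k r_k}\,\cF_{\vec{\gamma}}^{\vec{r}}(\sigma)\,\prod_{k=1}^{s}M_{\gamma_k,r_k},
\end{equation}
where $M_{\gamma_k,r_k}$ is the matrix coefficient associated with the $r_k$-th Taylor coefficient of the appropriate block. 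By hypothesis~\eqref{eq:decouplingsufficientcond}, every scalar $\cF_{\vec{\gamma}}^{\vec{r}}(\sigma)$ with $\bigoplus_k\gamma_k=1$ and $s+\sum_k r_k\le N$ vanishes, so only terms with $s+\sum_k r_k\ge N+1$ survive in the off-diagonal part.

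It remains to bound the surviving tail uniformly in $T$ (for $T$ in a compact interval). I would use analyticity of $\Xorig$ on $[0,T]$ to obtain a constant $C$ and a radius $\rho>0$ such that $\|X_{BC,r}\|\le C\rho^{-r}$, which together with $|\sigma|\le1$ gives an absolutely convergent bound on the residual sum of the form $C'T^{N+1}\sum_{s\ge 1}\tfrac{(C''T)^{s-1}}{s!}=O(T^{N+1})$ by the usual Dyson-series estimate. This proves~\eqref{eq:errortermstfsse} with $S_S\oplus S_E$ defined as the block-diagonal part of $\Stf(T)$ (or, equivalently, as the truncation of its Dyson series to block-diagonal terms). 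The main technical obstacle I anticipate is not the combinatorial identification of the off-diagonal parity, which is elementary, but rather the clean bookkeeping needed to combine the Taylor expansion of the generator with the time-ordered integration so that the hypothesis~\eqref{eq:decouplingsufficientcond} matches the relevant integrals exactly; once the change of variables is done, everything reduces to recognising the scalars~\eqref{eq:scalarscFbosdec} and invoking analytic tail estimates.
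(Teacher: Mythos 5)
Your proposal is correct and follows essentially the same route as the paper: the paper's sum over index chains $(B,C)\in\cV_s(S,E)$ with the parity identity $\bigoplus_k\gamma(B_k,C_k)=1$ is in bijection with your sum over parity vectors $\vec{\gamma}$ with $\bigoplus_k\gamma_k=1$, and both reduce the off-diagonal blocks of the Dyson series to the scalars $\cF_{\vec{\gamma}}^{\vec{r}}(\sigma)$ killed by the hypothesis. Your explicit analytic tail bound is a minor refinement the paper leaves implicit.
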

\begin{proof}
	Compute the expression
	\begin{equation}
	\left(\prod_{k=1}^s\Xtf(t_k)\right)_{SE}=\sum_{(B,C)\in\cV_s(S,E)} \prod_{k=1}^s \Xtf_{B_kC_k}(t_k)\quad \label{eq:firstidentityproofcompos}
	\end{equation} inside the integrals in the Dyson expansion of~\eqref{eq:StfDyson}.
	Here we sum over the set $\cV_s(S,E)$ of sequences
	$(B,C)=(B_1,\ldots,B_s,C_1,\ldots,C_s)\in \{S,E\}^{2s}$
	such that 
	\begin{align}
	B_s&=S\\
	C_1&=E\qquad\textrm{ and }\label{eq:chainconditionproduct}\\ 
	C_{k+1}&=B_{k}\qquad\textrm{for all }k\in [s-1]\ .
	\end{align}
	We define
	\begin{equation}
	\gamma(B,C)=
	\begin{cases}
	0 & (B,C)\in \{(S,S),(E,E)\}\\
	1& (B,C)\in \{(S,E),(E,S)\}
	\end{cases}
	\end{equation}
	and note that Eqs.~\eqref{eq:chainconditionproduct} imply the identity
	\begin{equation}
	\bigoplus_{k=1}^s \gamma(B_k,C_k)=1 \quad\textrm{for all }
	(B,C)\in\cV_s(S,E) .\label{eq:vanishingconditionajbj}
	\end{equation}
	With $\Xtf(t)$ given by 
	\begin{equation}
	\Xtf(t)=\begin{pmatrix}
	X_{SS}(t)& \sigma(t/T)X_{SE}(t)\\
	\sigma(t/T)X_{ES}(t) & X_{EE}(t)
	\end{pmatrix}\ ,
	\end{equation} the expression~\eqref{eq:firstidentityproofcompos} becomes
	\begin{equation}
	\left(\prod_{k=1}^s\Xtf(t_k)\right)_{SE}
	=\sum_{(B,C)\in\cV_s(S,E)}\prod_{k=1}^s\sigma\big(\tfrac{t_k}{T}\big)^{\gamma_k}\, X_{B_kC_k}(t_k) 
	\end{equation} where we write $\gamma_k=\gamma(B_k,C_k)$.
	Inserting this and the analytic expansions~\eqref{eq:analyticseriesexpansion} into the upper off-diagonal part of the Dyson expansion~\eqref{eq:StfDyson} gives
	\begin{align}
	\left(\Stf(T)\right)_{SE}=\sum_{s=0}^{\infty }
	&\sum_{(B,C)\in\cV_s(S,E)} \sum_{r_1,\ldots,r_s}  \prod_{k=1}^s  X_{B_kC_k,r_k} \cdot\\
	& \qquad \cdot \cF_{\gamma_1,\ldots,\gamma_s}^{r_1,\ldots,r_s} (\sigma)\ T^{s+\sum_{j=1}^sr_j} \label{eq:seexpression}
	\end{align}
	where $\cF(\sigma)$ is defined by~\eqref{eq:scalarscFbosdec}. 
	With property~\eqref{eq:vanishingconditionajbj} of the sequences in $\cV_s(S,E)$ and the assumptions Eq.~\eqref{eq:decouplingsufficientcond} we conclude that
	\begin{equation}
	\cF_{\gamma(B_1,C_1),\ldots,\gamma(B_s,C_s)}^{r_1,\ldots,r_s}=0
	\end{equation}
	whenever $s+\sum_{k=1}^s r_k\leq N$ and $(B,C)\in\cV_s(S,E)$. 
	Thus Eq.~\eqref{eq:seexpression} gives
	\begin{equation}
	\left(\Stf(T)\right)_{SE}=O(T^{N+1})\ .
	\end{equation}
	Analogous reasoning yields $\left(\Stf(T)\right)_{ES}=O(T^{N+1})$, hence the claim follows.
\end{proof}

Now consider the concrete scheme from Theorem~\ref{thm:decoupling}, where the pulse $-I_S$ is applied at the Uhrig times $t^{\mathsf{UDD}}_j=\Delta_j T$ from~\eqref{eq:Uhrigtimesapp}.  Let $F^{\mathsf{UDD}}_0,F^{\mathsf{UDD}}_1:[0,1]\rightarrow \{-1,1\}$ be the functions introduced in Lemma~\ref{lem:UDDfunctions}. They satisfy
\begin{align}\label{eq:FUDDsigma}
F^{\mathsf{UDD}}_\gamma(\tau)=\sigma(\tau)^\gamma\quad\textrm{ for all }\tau\in [0,1],\gamma\in\mathbb{Z}_2\ .
\end{align} since $F^{\textsf{UDD}}_1(\tau)=\sigma(\tau)$ and $F^{\textsf{UDD}}_0(\tau)=1$ is the constant function.
	Recalling Lemma~\ref{lem:maindecouplingdirectsum}, 
	it suffices to show that the function~$\sigma$ satisfies the condition
	defined by~\eqref{eq:decouplingsufficientcond}.   
	Inserting~\eqref{eq:FUDDsigma}, this condition takes the form that for all $s\in\mathbb{N}$, $r_1,\ldots,r_s\in\mathbb{N}_0$ and
	$\gamma_1,\ldots,\gamma_s\in \mathbb{Z}_2$:
	\begin{align}
	\int_0^1  d\tau_s\cdots \int_0^{\tau_{2}}d\tau_1 \prod_{k=1}^s F^{\mathsf{UDD}}_{\gamma_k}(\tau_k)\tau_k^{r_k}&=0
	\end{align}
	if $s+\sum_{k=1}^s r_k\leq N$ and $\bigoplus_{k=1}^s \gamma_k\neq 0$. 
	According to Lemma~\ref{lem:UDDfunctions} the functions $F^{\mathsf{UDD}}_0$, $F^{\mathsf{UDD}}_1$ have this property.
Lemma~\ref{lem:maindecouplingdirectsum} thus implies that the toggling frame evolution~$\Stf(T)$ is decoupled up to order~$N$ and (by $\Sres(T)=\pm\Stf(T)$) the resulting evolution~$\Sres(T)$ as well. This proves Theorem~\ref{thm:decoupling}.

\section{A bound on sufficient decoupling rates}\label{app:suffrate}
Application of our decoupling scheme requires a DD control rate~$1/T$ sufficiently large compared to the energy scale set by the uncontrolled system-bath evolution. To estimate this in more detail, we give a bound on the constant appearing in the error~\eqref{eq:errortermstfsse}. We consider the time-independent case and show the following:

\begin{lemma}\label{lemma:sufficientdecrate}
	Let $\Xorig=\begin{pmatrix}
	X_{SS}&X_{SE}\\
	X_{SE}&X_{EE}
	\end{pmatrix}\in\mathfrak{sp}(2(n_S+n_E))$ be a time-independent generator and define
	\begin{align}
	J_0=\|X_{EE}\| \ \textrm{ and }\  J_z=\|X_{SS}\|+\|X_{SE}\| \ .
	\end{align}
	Then applying $N$ pulses at UDD times as above results in the evolution
	\begin{align}
	\Sres(T)=\begin{pmatrix}
	\Sres(T)_{SS}&\Sres(T)_{SE}\\
	\Sres(T)_{ES}&\Sres(T)_{EE}
	\end{pmatrix}
	\end{align} where
	\begin{align}
	\|\Sres(T)_{ES}\|,\|\Sres(T)_{SE}\|\leq\sum_{s=N+1}^\infty \frac{(J_0+J_z)^{s}}{s!}\cdot T^{s}\ . 
	\end{align}
	In particular, there are $S_S\in \mathbb{R}^{2n_S\times 2n_S}$ and $S_E\in\mathbb{R}^{2n_E\times 2n_E}$ such that
	we have the bound
	\begin{align}
	\|\Sres(T)-S_S\oplus S_E\|\leq \frac{e\sqrt{2}((J_0+J_z)T)^{N+1}}{(N+1)!}\ \label{eq:sresss}
	\end{align}
	if $(J_0+J_z)T\leq 1$. 
\end{lemma}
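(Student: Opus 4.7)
The plan is to specialize the Dyson expansion~\eqref{eq:seexpression} of the toggling-frame evolution to the time-independent case and then estimate the surviving terms one by one. In the time-independent case the Taylor coefficients~\eqref{eq:analyticseriesexpansion} reduce to $X_{BC,r}=X_{BC}\delta_{r,0}$, so only the $\vec r=\vec 0$ contribution remains, leaving
\[
\Stf(T)_{SE}=\sum_{s=0}^{\infty} T^{s}\sum_{(B,C)\in\cV_{s}(S,E)}\left(\prod_{k=1}^{s}X_{B_{k}C_{k}}\right)\cF^{0,\ldots,0}_{\gamma_{1},\ldots,\gamma_{s}}(\sigma).
\]
By~\eqref{eq:vanishingconditionajbj} every $(B,C)\in\cV_{s}(S,E)$ satisfies $\bigoplus_{k}\gamma_{k}=1$, so the UDD cancellation established via Lemma~\ref{lem:UDDfunctions} (equivalently, the argument at the end of Appendix~\ref{app:proofdec}) kills every term with $s\le N$.

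For the tail $s\ge N+1$ I would apply two elementary estimates. First, $|\sigma|=1$ together with the volume $1/s!$ of the ordered $s$-simplex gives $|\cF^{\vec 0}_{\vec\gamma}(\sigma)|\le 1/s!$. Second, by submultiplicativity of the operator norm, $\|\prod_{k=1}^{s}X_{B_{k}C_{k}}\|\le\prod_{k=1}^{s}\|X_{B_{k}C_{k}}\|$. Summing the right-hand side over $(B,C)\in\cV_{s}(S,E)$ yields exactly the $(S,E)$-entry of $M^{s}$, where $M$ is the nonnegative $2\times 2$ matrix $M_{BC}=\|X_{BC}\|$ indexed by $\{S,E\}$: the chain constraints $C_{1}=E$, $B_{s}=S$, $C_{k+1}=B_{k}$ that define $\cV_{s}(S,E)$ are precisely the matrix-multiplication index constraints.

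The key structural input is then the symplectic constraint. From $\Xorig\in\mathfrak{sp}(2n)$ and $\Xorig=AJ_{n}$ with $A=A^{T}$, one reads off $X_{SE}=A_{SE}J_{E}$ and $X_{ES}=A_{SE}^{T}J_{S}$ with $J_{S},J_{E}$ orthogonal, so $\|X_{SE}\|=\|X_{ES}\|$. Hence both column sums of $M$ are bounded by $J_{0}+J_{z}$, which gives $[M^{s}]_{SE}\le(J_{0}+J_{z})^{s}$ and, combined with the previous estimates, the stated bound on $\|\Sres(T)_{SE}\|$. The bound on $\|\Sres(T)_{ES}\|$ follows from the analogous expansion of the $ES$-block.

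For the second inequality~\eqref{eq:sresss} I would take $S_{S}=\Sres(T)_{SS}$ and $S_{E}=\Sres(T)_{EE}$, so that $\Sres(T)-S_{S}\oplus S_{E}$ is purely off-diagonal; its operator norm is controlled by the two off-diagonal block norms up to a dimensional factor at most $\sqrt{2}$. The standard Taylor-tail estimate $\sum_{s\ge N+1}x^{s}/s!\le e\,x^{N+1}/(N+1)!$ for $x\le 1$, applied with $x=(J_{0}+J_{z})T$, then delivers the stated constant. I expect the only genuinely delicate step to be the matrix-sum estimate: it is the symplectic identity $\|X_{SE}\|=\|X_{ES}\|$ that lets one replace $\|X_{ES}\|$ in the column sum by $\|X_{SE}\|$ and thereby absorb both off-diagonal couplings into the asymmetric definition $J_{z}=\|X_{SS}\|+\|X_{SE}\|$; without it one would only get a weaker bound.
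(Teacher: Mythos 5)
Your proposal is correct and follows essentially the same route as the paper's proof: specialize the toggling-frame Dyson series to $\vec r=\vec 0$, invoke the UDD cancellation for $s\le N$, bound $|\cF^{\vec 0}_{\vec\gamma}|\le 1/s!$, use the symplectic identity $\|X_{SE}\|=\|X_{ES}\|$ to control the sum over index chains by $(J_0+J_z)^s$, and finish with the Taylor-tail estimate and the $\sqrt 2$ off-diagonal bookkeeping. Your phrasing of the chain sum as the $(S,E)$ entry of $M^s$ with column sums bounded by $J_0+J_z$ is a slightly cleaner packaging of the paper's multinomial estimate, but it is the same argument in substance.
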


This bound is similar in spirit to the 
analysis of Uhrig decoupling in a model of a single spin with pure dephasing~\cite{uhriglidar}, i.e., $H=I_S\otimes B_0+\sigma_z\otimes B_z$. Here $B_z$ is not necessarily traceless and may involve system-only evolution terms.  
It was shown in~\cite{uhriglidar} that the  error term takes the form~$O(1/((N+1)!)(J_0+J_z)^{N+1}T^{N+1})$, 
where $J_0=\|B_0\|$ and $J_z=\|B_z\|$.  
\begin{proof}
	For the time-independent case, we have $X_{BC,r}=0$ for $r>0$, hence Eq.~\eqref{eq:seexpression} reduces to
	\begin{align}
	\big(\Stf(T)\big)_{SE}=\sum_{s=0}^{\infty } &
	\sum_{(B,C)\in\cV_s(S,E)} \prod_{k=1}^s  X_{B_kC_k}\\& \cdot \cF_{\gamma_1,\ldots,\gamma_s}^{0,\ldots,0} (\sigma)\  T^{s} \label{eq:StfSEimeindependent}
	\end{align}
	Observe that the first $N$ terms of this series vanish due to the property
	\begin{equation}
	\cF_{\gamma_1,\ldots,\gamma_s}^{0,\ldots,0}(\sigma)=0\quad
	\textrm{ if }\ \begin{cases}
	s\leq N\textrm{ and }\\
	\bigoplus_{k=1}^s \gamma_k=1\ 
	\end{cases}	\label{eq:FUDDtimindependent}
	\end{equation} of the UDD sequence. For the terms with $s\ge N+1$ we use that $\|X_{SE}\|=\|X_{ES}\|$ (which follows from the definition of $X=AJ$ and the fact that $A$ is symmetric) to bound
	\begin{align}
	\sum_{(B,C)\in\cV_s(S,E)} &\| \prod_{k=1}^s  X_{B_kC_k}\| \label{eq:XSEproduct}\\
	\leq &\sum_{\substack{k_1,k_2,k_3\\ k_1+k_2+k_3=s}}\frac{\|X_{SE}\|^{k_1}\|X_{SE}\|^{k_2}\|X_{SE}\|^{k_3}}{k_1!k_2!k_3!} \\
	= &\ (\|X_{SS}\|+\|X_{SE}\|+\|X_{EE}\|)^s\ .
	\end{align}
	We also bound
	\begin{align}
	|\cF_{\gamma_1,\ldots,\gamma_s}^{0,\ldots,0}|\leq \int_0^1  d\tau_s\cdots \int_0^{\tau_{2}}d\tau_1 \leq
	\frac{1}{s!}\ .
	\end{align}
	Inserting this, Eqs.~\eqref{eq:XSEproduct} and~\eqref{eq:FUDDtimindependent} into~\eqref{eq:StfSEimeindependent} results in
	\begin{align}
	\|&\left(\Stf(T)\right)_{SE}\|\\
	&\leq \sum_{s=N+1}^{\infty }
	\sum_{(B,C)\in\cV_s(S,E)}\| \prod_{k=1}^s  X_{B_kC_k}\| \cdot |\cF_{\gamma_1,\ldots,\gamma_s}^{0,\ldots,0} (\sigma)|\ T^{s}\\
	&\leq \sum_{s=N+1}^{\infty }\frac{1}{s!}(\|X_{SS}\|+\|X_{SE}\|+\|X_{EE}\|)^s \ T^{s}\ . \label{eq:}
	\end{align} As $\Stf(T)=\Sres(T)$, this proves the first statement. 
	Under the assumption $(J_0+J_z)T\leq1$ then we have
	\begin{equation}
	\|\left(\Stf(T)\right)_{SE}\|\leq ((J_0+J_z)T)^{N+1}\sum_{k=0}^{\infty }\frac{1}{(N+1+k)!}\ \label{eq:stfseex}
	\end{equation}
	We note that $R_{N}:=\sum_{k=0}^{\infty }\frac{1}{(N+1+k)!}$ is the $N$-th remainder term in the Taylor series of the exponential function around~$0$, i.e., 
	\begin{align}
	\exp(1)&=\sum_{k=0}^{N}\frac{1}{k!}+R_{N}
	\end{align}
	and can thus be expressed by the Lagrange form
	\begin{align}
	R_{N}&=\frac{\exp^{(N+1)}(\xi)}{(N+1)!}1^{N+1}\quad\textrm{ for some }\xi\in [0,1]\ .
	\end{align}
	We conclude that
	\begin{align}
	|R_{N}|\leq \frac{e}{(N+1)!}
	\end{align}
	Inserting this into~\eqref{eq:stfseex} gives 
	\begin{equation}
	\|\left(\Stf(T)\right)_{SE}\|\leq \frac{e\ ((J_0+J_z)T)^{N+1}}{(N+1)!}\ .
	\end{equation}Since the same bound holds for $\|\left(\Stf(T)\right)_{ES}\| $ and $\|\Sres(T)-S_S\oplus S_E\|^2= \|\left(\Stf(T)\right)_{SE}\|^2+\|\left(\Stf(T)\right)_{ES}\|^2$ 
	for $S_S=\Sres(T)_{SS}$ and $S_E=\Sres(T)_{EE}$,	we obtain Eq.~\eqref{eq:sresss}.
\end{proof}
We note that the more refined bounds obtained in~\cite{uhriglidar} expressed in terms of the ``odd part of the bounding series'' (cf.~\cite[Eq.~(12)]{uhriglidar}) appear to be less straightforward to generalize to the bosonic setting.
One way of improving the above estimate may use the fact that only terms with $k_2$ odd appear in the expression~\eqref{eq:XSEproduct}; here we have neglected this fact.  We leave such improvements  as an open problem for future work.

\section{Bosonic decoupling with arbitrary pulse times}\label{app:noisespectrum}
Here we consider the problem of decoupling a single bosonic mode from a bosonic bath with 
multiple modes. We will denote the system's mode operators by $\{Q,P\}$ and those of the bath by  $\{Q_j,P_j\}_j$. 
We assume that initially, system and environment are in a product state and that the environment is in the thermal state $\rho^E_\beta=e^{-\beta H_0^E}/\tr\left(e^{-\beta H_0^E}\right)$ at inverse temperature~$\beta$. The model  we consider is described by the original time independent Hamiltonian
\begin{align}\label{eq:Horigtemperature}
&\Horig= Q \sum_j \lambda_j Q_j +H_0^E\quad\textrm{where}\\
&\quad H_0^E=\tfrac{1}{2} \sum_j \omega_j (Q_j^2+P_j^2)
\end{align}
We analyze the resulting evolution
\begin{align}
\Ures(T)=&e^{i\Horig(1-\Delta_L)T}Ue^{i\Horig(\Delta_L-\Delta_{L-1})T}U\cdots\\& e^{i\Horig(\Delta_2-\Delta_1)T}Ue^{i\Horig\Delta_1T}\label{eq:resultingevol}
\end{align} after applying the Gaussian unitary 
\begin{align}
U=e^{i\frac{\pi}{2}(Q^2+P^2)}\label{eq:uplsunit}
\end{align} (acting on the mode operators as $UQU^*=-Q$ and $ UPU^*=-P$) at times $\Delta_jT$ for $j=1,\ldots,L$ and evolving under $\Horig$ at all other times.

We investigate how the efficiency of decoupling depends on the parameters $\lambda_j$ and $\omega_j$ and on the bath temperature $\beta$. We obtain identical expressions as in the analysis of $\pi$-pulse DD in the spin-boson model~\cite{uhrig2008exact,pasiniuhrigpra10}, see Theorem~\ref{thm:temperatureMres} below. Indeed, much of the following derivation closely mirrors the reasoning of~\cite{uhrig2008exact}, although the considered figure of merit is somewhat different: We directly compute the Gaussian CPTP map describing the system's evolution.
\begin{lemma}\label{lemma:temperatureUt}
	Define $U(t)=e^{-i\Horig t}Ue^{i\Horig t}$. Then
	\begin{align}
	U(t)&=U\exp(iK(t))=\exp(-iK(t))U\ .
	\end{align}
	where
	\begin{align}
	K(t)&=\frac{1}{\sqrt{2}}Q\sum_{j}\frac{\lambda_j}{\omega_j}\left(f(\omega_jt)a_j^\dagger+\overline{f(\omega_j t)}a_j\right)\ .
	\end{align} for $f(z)=ie^{-iz}-i$, where 
	\begin{align}
	a_j^\dagger=\frac{1}{\sqrt{2}}(Q_j-iP_j) \qquad a_j=\frac{1}{\sqrt{2}}(Q_j+iP_j)
	\end{align} 
	are the bosonic creation and annihilation operators of mode $j$.
\end{lemma}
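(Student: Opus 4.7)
The plan is to reduce the conjugation $U(t)=e^{-iH^{\mathrm{or}}t}Ue^{iH^{\mathrm{or}}t}$ to a product of two bath-only evolutions by using the symmetry of the pulse, then exploit the displaced-oscillator (linear-in-creation/annihilation) structure of the resulting generators so that a Magnus-type argument terminates after one commutator.

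First, I would pull $U$ out of the Heisenberg sandwich. Because $U$ commutes with every bath operator and acts on the system by phase-space inversion, conjugation sends $H^{\mathrm{or}}=H_0^E+V$ (with $V=Q\sum_j\lambda_j Q_j$) to $U^{-1}H^{\mathrm{or}}U=H_0^E-V=:H_-$. Setting $H_+:=H^{\mathrm{or}}$, I therefore obtain
\begin{equation}
U(t)=U\cdot e^{-iH_-t}\,e^{iH_+t}\,,
\end{equation}
so it suffices to identify the bath operator $W(t):=e^{-iH_-t}e^{iH_+t}$ with $e^{iK(t)}$ (up to the precise scalar prefactor, which will drop out of the final reformulation as $e^{-iK(t)}U$).

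Second, I would pass to the interaction picture with respect to the free bath Hamiltonian $H_0^E$. Writing $V_I(s):=e^{iH_0^E s}Ve^{-iH_0^E s}=\tfrac{Q}{\sqrt 2}\sum_j\lambda_j(e^{-i\omega_j s}a_j+e^{i\omega_j s}a_j^\dagger)$, both $e^{\pm iH_\pm t}$ unfold as chronologically and anti-chronologically ordered exponentials of $\pm i\int_0^t V_I(s)\,ds$, conjugated by $e^{\pm iH_0^E t}$. The key structural observation is that $V_I(s)$ is \emph{linear} in bath creation/annihilation operators with $Q$ as a system-side coefficient, so $[V_I(s_1),V_I(s_2)]$ is a c-number multiple of $Q^2$. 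In particular $[V_I,[V_I,V_I]]=0$, so the Magnus expansion of each of $\mathcal T\exp(i\int V_I)$ and $\bar{\mathcal T}\exp(i\int V_I)$ terminates at the second-order correction, and these two corrections differ only in sign. Multiplying the two exponentials therefore collapses the corrections and leaves the simple expression $\exp\!\bigl(c\cdot i\int_0^t V_I(s)\,ds\bigr)$ for an explicit scalar $c$.

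Third, conjugation by $e^{\pm iH_0^E t}$ shifts the integration variable from $[0,t]$ to $[-t,0]$, and a direct evaluation of the Fourier integrals
\begin{equation}
\int_{-t}^0 e^{-i\omega u}\,du=\tfrac{\overline{f(\omega t)}}{\omega}\,,\qquad
\int_{-t}^0 e^{i\omega u}\,du=\tfrac{f(\omega t)}{\omega}
\end{equation}
with $f(z)=ie^{-iz}-i$ reassembles the bracket $f(\omega_j t)a_j^\dagger+\overline{f(\omega_j t)}a_j$ appearing in $K(t)$. This yields $W(t)=\exp(iK(t))$ (up to the scalar prefactor tracked through step two), proving the first equality. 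The second equality $U(t)=e^{-iK(t)}U$ then follows instantly from the fact that $K(t)$ is linear in $Q$ and commutes with the bath-side copy of $U$, so $UK(t)U^{-1}=-K(t)$ and hence $Ue^{iK(t)}=e^{-iK(t)}U$.

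The main obstacle will be the bookkeeping in the second step: one must track the ordering (chronological vs.\ anti-chronological), the signs coming from $H_-=H_0^E-V$ versus $H_+=H_0^E+V$, and the cancellation of the Magnus corrections, and verify that the c-number commutator really does commute with $V_I$ once $Q$ is put back in. The remaining calculations are essentially routine manipulations of displacement-operator algebra and Fourier integrals.
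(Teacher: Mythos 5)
Your argument is correct, but it takes a genuinely different route from the paper's. The paper stays inside the symplectic-matrix formalism: it computes the $(2n+2)\times(2n+2)$ exponential $e^{tAJ}$ in closed form, observes that $U^*U(t)$ corresponds to the product $e^{t\tilde AJ}e^{tAJ}$ with $\tilde A$ obtained from $A$ by $\omega_j\to-\omega_j$, and matches this product against the exponential $e^{BJ}$ of the symmetric matrix associated with the ansatz $K=\tfrac{1}{\sqrt2}Q\sum_j(\kappa_ja_j^\dagger+\bar\kappa_ja_j)$, solving for $\kappa=-2w(t)+2iv(t)$. You instead work at the operator level: factor out $U$ to get $U(t)=U\,e^{-iH_-t}e^{iH_+t}$, pass to the interaction picture, and use that $[V_I(s_1),V_I(s_2)]\propto Q^2$ commutes with every $V_I$, so the Magnus series of each ordered exponential terminates at second order and the two second-order corrections cancel between the time- and anti-time-ordered factors, leaving $\exp\bigl(2i\int V_I\bigr)$ exactly. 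Each approach buys something: the paper's is elementary linear algebra in the formalism used throughout, but it only determines the Gaussian unitary up to a scalar phase (symplectic matrices are blind to phases, which the paper discards later anyway); yours establishes the operator identity exactly --- the would-be phase is the $Q^2$-valued Magnus term, which cancels identically --- and makes transparent \emph{why} the generator closes on something linear in the bath modes; it is essentially the displaced-oscillator computation underlying Uhrig's spin-boson analysis on which this appendix is modelled. One point to flag: carried out carefully, your computation (with $c=2$) produces the prefactor $2/\sqrt2=\sqrt2$ in front of $Q\sum_j\tfrac{\lambda_j}{\omega_j}(\cdots)$, which agrees with the paper's own intermediate formula $K(t)=2Q\sum_j(-w_j(t)Q_j+v_j(t)P_j)$ but not with the $1/\sqrt2$ printed in the lemma; a first-order check ($[\Horig,U]=2VU$, hence $K(t)\approx 2tV$ for small $t$) confirms the lemma as stated is off by a factor of $2$ and your normalization is the correct one. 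Finally, $W(t)$ is not a pure bath operator (it carries a factor of $Q$), though nothing in your argument depends on that slip.
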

\begin{proof}	
	It will again be convenient to describe this in terms of elements and generators of the symplectic group. 
	Let us assume that there are $n$ bath modes, and let us order the mode operators as
	$R=(Q,P,Q_1,\ldots,Q_n,P_1,\ldots,P_n)$. Then
	\begin{align}
	\Horig&=\frac{1}{2}\sum_{j,k}A_{j,k}R_jR_k
	\end{align}
	where $A$ is given by
	\begin{align}
	A&=\begin{pmatrix}
	0 & 0 & \lambda^T & 0_{1\times n}\\
	0 & 0 & 0_{1\times n} & 0_{1\times n}\\
	\lambda &  0_{n\times 1}&\Omega & 0_{n\times n}\\
	0_{n\times 1}& 0_{n\times 1} &0_{n\times n}&\Omega
	\end{pmatrix}
	\end{align}
	where  $\Omega=\mathsf{diag}(\omega_1,\ldots,\omega_n)$
	and $\lambda=\begin{pmatrix}\lambda_1 &\lambda_2 & \cdots & \lambda_n\end{pmatrix}^T$. 
	The symplectic group element~$\Sorig(t)$ associated with $e^{it\Horig}$
	can be computed to be
	\begin{align}
	\Sorig(t)=e^{tAJ}&=
	\begin{pmatrix}
	1& x(t)  &v(t)^T & w(t)^T\\
	0&  1& 0_{1\times n} & 0_{1\times n}\\
	0_{n\times 1} & w(t) &\cos(\Omega t) & -\sin(\Omega t)\\
	0_{n\times 1} & v(t) & \sin(\Omega t) & \cos(\Omega t)
	\end{pmatrix}\label{eq:tajexponentiated}
	\end{align} where
	\begin{align}
	v(t)&:=\Omega^{-1}(\cos(\Omega t)-I)\lambda\\
	w(t)&:=-\Omega^{-1}\sin(\Omega t)\lambda\\
	x(t)&:=t \lambda^T\Omega^{-1}\lambda-\lambda^T \Omega^{-2}\sin(\Omega t) \lambda\ .
	\end{align}
	Let us consider 
	\begin{align}
	U^*U(t)=U^*e^{-i\Horig t}Ue^{i\Horig t}&=e^{i\tilde{H} t}e^{i\Horig t}
	\end{align}
	where $\tilde{H}=- U^*\Horig U=\frac{1}{2}\sum_{j,k}\tilde{A}_{j,k}R_jR_k$.
	We note that this agrees with the Hamiltonian~$\Horig$ up to the replacements $\omega_j\rightarrow-\omega_j$.
	Since only $v(t)$ and $x(t)$ change sign under the substitution $\omega_j\rightarrow-\omega_j$ we conclude that $e^{t\tilde{A}J}$ is obtained from $e^{t{A}J}$ by substituting $v(t)\to -v(t)$ and $x(t)\to -x(t)$. Then we can compute the symplectic group element $e^{t\tilde{A}J}e^{tAJ}$ associated with $U^*U(t)$.
	On the other hand, consider an operator of the form
	\begin{equation}\label{eq:K}
	K=\frac{1}{\sqrt{2}}Q\sum_j (\kappa_ja^\dagger_j+\bar{\kappa}_ja_j)\ .
	\end{equation}
	This can equivalently be expressed as
	\begin{align}
	K&=Q\sum_j (\mathsf{Re}(\kappa_j)Q_j+\mathsf{Im}(\kappa_j)P_j)=\frac{1}{2}\sum_{j,k}B_{j,k}R_jR_k\ .
	\end{align}
	where 
	\begin{align}\label{eq:B}
	B&=\begin{pmatrix}
	0 & 0 & \mathsf{Re}(\kappa)^T & \mathsf{Im}(\kappa)^T\\
	0 & 0 & 0_{1\times n} & 0_{1\times n}\\
	\mathsf{Re}(\kappa) & 0 & 0_{n\times n}& 0_{n\times n}\\
	\mathsf{Im}(\kappa) & 0 & 0_{n\times n}& 0_{n\times n}
	\end{pmatrix}\ 
	\end{align}
	where 
	$\kappa=\begin{pmatrix}\kappa_1 &\kappa_2 & \cdots & \kappa_n\end{pmatrix}^T$. 
	Computing $e^{BJ}$ we see
	that this is equal $e^{t\tilde{A}J}e^{tAJ}$
	for the choice $\kappa=-2w(t)+2iv(t)$. That is, for
	\begin{align}
	K(t)&=2Q\sum_{j=1}^n \left(-w_j(t)Q_j+v_j(t)P_j\right)\ ,
	\end{align}
	we find that $U^* U(t)=e^{i\tilde{H} t}e^{i\Horig t}=e^{iK(t)}$.
	Inserting the relations $Q_j=\tfrac{1}{\sqrt{2}}(a_j^\dagger+a_j)$ and $P_j=\tfrac{i}{\sqrt{2}}(a_j^\dagger-a_j)$ implies the claim. 
\end{proof}

Let us use $U(t):=e^{-i\Horig t}Ue^{i\Horig t}$ to rewrite the resulting evolution in~\eqref{eq:resultingevol} as
\begin{align}
\Ures(T)&=   e^{i\Horig T}  U(\Delta_LT)\cdots U(\Delta_2T)U(\Delta_1T)\ .
\end{align}
Inserting $U(t)=U\exp(iK(t))$ from Lemma~\ref{lemma:temperatureUt} and using the commutation relation $U\exp(iK(t))=\exp(-iK(t))U$ we obtain
\begin{align}
\Ures(T)=&e^{i\Horig T}
e^{-iK(\Delta_LT)}e^{iK(\Delta_{L-1}T)}e^{-iK(\Delta_{L-2}T)}\cdots\\&\cdots e^{(-1)^{L-1}iK(\Delta_2T)} e^{(-1)^L iK(\Delta_1T)}U^L\ .
\end{align}
For $L$ even, this becomes
\begin{align}
\Ures(T)&=e^{i\Horig T}\prod_{m=1}^L e^{-i(-1)^m K(\Delta_mT)}
\end{align}
since $U^2=I$. 
Observe also that
\begin{align}
[iK(t_1),iK(t_2)]&=i\varphi(t_1,t_2)I
\end{align}
for some scalar $\varphi(t_1,t_2)$ and thus 
\begin{align}
[K(t_1),[K(t_1),K(t_2)]]&=0\qquad\textrm{ for all }t_1,t_2\in\mathbb{R}\ .
\end{align}
With the CBH formula $e^{A}e^{B}=e^{A+B}e^{[A,B]/2}$ if $[A,[A,B]]=[B,[A,B]]=0$
we conclude that
\begin{align}
\Ures(T)&=e^{i\Horig T}e^{i\Kres(T)}e^{i\phires(T)}\label{eq:uresproductexpression}
\end{align}
where
\begin{align}
\Kres(T)&=-\sum_{j=1}^L (-1)^j K(\Delta_j T)\\
\phires(T)&=-\frac{1}{2}\sum_{j=1}^{L-1}\sum_{\ell=1}^j(-1)^{j+\ell}\varphi(\Delta_{j+1}T,\Delta_\ell T)\ .
\end{align}
The exact form of $\phires$ will not be needed here but we compute $\Kres(T)$ to be
\begin{align}
\Kres(T)&=-\frac{1}{\sqrt{2}}Q\sum_{j}\frac{\lambda_j}{\omega_j}
\left( f_L(\omega_j T)a_j^\dagger+ \overline{f_L(\omega_j T)} a_j\right)
\end{align}
where
\begin{align}
f_L(z)&=\sum_{m=1}^{L}(-1)^m f(z\Delta_m)=2i\sum_{m=1}^{L}(-1)^m e^{-iz\Delta_m} 
\end{align}
since $\sum_{m=1}^{L}(-1)^m=0$ for $L$ even.

The main result of this section is a full description of the system's resulting evolution when an arbitrary pulse sequence
consisting of multiple applications of the unitary~$U$ (cf.~\eqref{eq:uplsunit}) is used. It is given by a Gaussian channel, i.e., a completely positive trace-preserving map, and is thus specified by its action on covariance matrices (see Eq.~\eqref{eq:covariancematrixdefmjkrho} below).

\begin{theorem}\label{thm:temperatureMres}
	Suppose  the  system and bath are initially in the product state $\rho \otimes\rho^E_\beta$, where the system's state $\rho$ has covariance matrix~$M$ and~$\rho^E_\beta=e^{-\beta H_0^E}/\tr\left(e^{-\beta H_0^E}\right)$ is the thermal state of the environment at inverse temperature~$\beta$.  
	Consider the state~
	\begin{align}
	\rho^{\mathsf{res}}=\tr_E\left(\Ures(T) (\rho\otimes\rho^E_\beta)\Ures(T)^*\right)
	\end{align} of the system at time $T$, i.e.,~after application of the pulse from Eq.~\eqref{eq:uplsunit} at times $\{T\Delta_j\}_{j=1}^L$ and uncontrolled evolution under~\eqref{eq:Horigtemperature} in between. Then $\rho^{\mathsf{res}}$ has covariance matrix
	\begin{align}\label{eq:Mres}
	\Mres =& \begin{pmatrix}
	1 & \xres\\
	0 & 1
	\end{pmatrix} M \begin{pmatrix}
	1 & \xres\\
	0 & 1
	\end{pmatrix}^T+
	\begin{pmatrix}
	\yres& 0\\
	0 & 0
	\end{pmatrix}
	\end{align}
	where 
	\begin{align}
	\xres&= \sum_j \frac{\lambda_j^2}{\omega_j^2}\Big[T {\omega_j}-\sin(\omega_jT) +\sin(\omega_jT)\mathsf{Re}(y_L(\omega_j T))\\
	& \qquad\qquad-(\cos(\omega_jT)-1)\mathsf{Im}(y_L(\omega_j T))\Big)\Big]\label{eq:xres}\\
	\yres &=\sum_j \frac{\lambda_j^2}{\omega_j^2}\coth \left(\frac{\beta\omega_j}{2}\right) \big|y_L(\omega_j T) \big|^2 \label{eq:yres}
	\end{align} 
	and where
	\begin{equation}\label{eq:yL}
	y_L(z)=2\sum_{m=1}^L (-1)^m e^{iz\Delta_m}+1-e^{iz}\ .
	\end{equation}
\end{theorem}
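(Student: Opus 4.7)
My plan is to leverage the product decomposition~\eqref{eq:uresproductexpression} to reduce the problem to a pure Gaussian-channel computation. The phase $e^{i\phires(T)}$ is a scalar and drops out of the evolved state, so I discard it. Both remaining factors $e^{i\Horig T}$ and $e^{i\Kres(T)}$ are Gaussian unitaries on the joint system-environment space --- the first because $\Horig$ is quadratic, the second because $\Kres(T)$ is bilinear in the system quadrature $Q$ and the bath ladder operators. Consequently $\Ures(T)$ is Gaussian, and the entire problem becomes a symplectic computation on mode operators and covariance matrices.

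Next I will compute the Heisenberg action of each factor. The symplectic matrix implementing $e^{i\Horig T}$ is given by~\eqref{eq:tajexponentiated}. For $e^{i\Kres(T)}$, I use that $\Kres(T) = -\tfrac{1}{\sqrt{2}}QB$ with $B$ linear in bath ladder operators, so all iterated commutators $[\Kres(T),[\Kres(T),R_a]]$ vanish. The Baker--Campbell--Hausdorff expansion of $e^{\pm i\Kres(T)}R_a e^{\mp i\Kres(T)}$ therefore truncates after one commutator, giving an explicit closed-form linear map: $Q$ is fixed, $P$ is shifted by a linear combination of bath quadratures weighted by $\mathrm{Re}\,f_L(\omega_j T)$ and $\mathrm{Im}\,f_L(\omega_j T)$, and each bath quadrature is shifted by a multiple of $Q$. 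Composing with~\eqref{eq:tajexponentiated} yields the full Heisenberg map $\Ures(T)^{*} R\,\Ures(T)$.

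The output covariance matrix $\Mres$ is then obtained as the system block of $S^{\mathsf{res}}(M\oplus M^E_\beta)(S^{\mathsf{res}})^{T}$, where $S^{\mathsf{res}}$ is the composed symplectic matrix and $M^E_\beta$ is the thermal bath covariance, diagonal with each one-mode block equal to $\tfrac{1}{2}\coth(\beta\omega_j/2)I_2$. Expanding this product and separating the $M$-dependent part from the thermal contribution will yield a unipotent shear acting on $M$ plus a rank-one noise term. The exact conservation $\Ures(T)^{*}Q\,\Ures(T)=Q$, which follows from $[\Horig,Q]=0=[\Kres(T),Q]$, is what forces this simple structure of $\Mres$.

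The main obstacle will be the algebraic simplification that recasts the thermal noise contribution as $\sum_j(\lambda_j^2/\omega_j^2)\coth(\beta\omega_j/2)|y_L(\omega_j T)|^2$ and collapses the cross-terms between $x(T),v(T),w(T)$ from~\eqref{eq:tajexponentiated} and $\mathrm{Re}\,f_L,\mathrm{Im}\,f_L$ from $\Kres(T)$ into~\eqref{eq:xres}. This relies on a trigonometric identity relating the bath kernel to $y_L$ rather than $f_L$; it will be obtained by completing the alternating sum $\sum_{m=1}^L(-1)^m e^{iz\Delta_m}$ at the endpoints $\Delta_0=0,\Delta_{L+1}=1$, with $L$ even (so $\sum_{m=1}^L(-1)^m=0$) supplying the correct boundary contribution. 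The remaining manipulations are routine.
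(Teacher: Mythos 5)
Your proposal follows essentially the same route as the paper's proof: starting from the decomposition~\eqref{eq:uresproductexpression}, discarding the scalar phase, computing the composed symplectic action (the paper via the explicit matrix exponential $e^{B^{\mathsf{res}}J}$ composed with~\eqref{eq:tajexponentiated}, you via the equivalent truncated Baker--Campbell--Hausdorff commutator expansion, which does terminate since $[\Kres(T),[\Kres(T),R_a]]=0$), extracting the system block of $\Sres(T)\,(M\oplus M^E_\beta)\,\Sres(T)^T$, and converting $f_L$ to $y_L$ by completing the alternating sum at the endpoints using that $L$ is even. The one caveat is your thermal covariance block $\tfrac{1}{2}\coth(\beta\omega_j/2)I_2$, which is inconsistent with the paper's anticommutator normalization~\eqref{eq:covariancematrixdefmjkrho} (under which $D_\beta=\mathsf{diag}(\coth(\beta\omega_1/2),\ldots,\coth(\beta\omega_n/2))$ with no factor $\tfrac{1}{2}$) and would make your $\yres$ come out half of~\eqref{eq:yres}.
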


We note that $\yres$ depends on $\beta$ whereas $\xres$ does not and that the matrix 
$\begin{pmatrix}
1 & \xres\\
0& 1
\end{pmatrix}$ is symplectic. Hence, if $\yres=0$, the evolution $M\mapsto \Mres$  
of the system's covariance matrix is described by a Gaussian unitary and, in particular, is decoupled. On the other hand, any value $\yres>0$ indicates that the system's evolution is non-unitary, with $\yres$ quantifying the degree of decoherence introduced. 
\begin{proof}
	The state~$\rho^E_\beta$ has covariance
	$M_\beta=D_\beta\oplus D_\beta$, where 
	\begin{align}
	D_\beta &=\mathsf{diag}(\coth(\beta\omega_1/2),\ldots,\coth(\beta\omega_n/2))\ .
	\end{align}
	The output covariance matrix~$M^{\mathsf{res}}$ of the system is obtained by
	taking the principal~$2\times 2$ submatrix of $\Sres(T) M\oplus (D_\beta\oplus D_\beta)\ \Sres(T)^T$.

	To compute~$\Sres(T)$, we consider the expression~\eqref{eq:uresproductexpression} for the associated Gaussian unitary~$\Ures(T)$. We may neglect the phase $e^{i\varphi^{\mathsf{res}}(T)}$ as we are only interested in the evolution of the covariance matrix. 	We note that the Hamiltonian~$\Kres(T)$ is again of the form~\eqref{eq:K} for $\kappa_j=-\frac{\lambda_j}{\omega_j}f_L(\omega_jT)$. Let 
	us denote the symmetric matrix~\eqref{eq:B} associated with $\Kres(T)$ by $B^\textsf{res}$.
	Then the resulting symplectic operation~$\Sres(T)$ is 
	\begin{align}
	\Sres(T)&=e^{TAJ}e^{B^\textsf{res}J}=\begin{pmatrix}
	1 & \xres & a^T & b^T\\
	0 & 1 & 0 & 0\\
	0 & c & \cos(\Omega T) & -\sin(\Omega T)\\
	0 & d & \sin(\Omega T) & \cos(\Omega T)
	\end{pmatrix}
	\end{align}
	where 
	\begin{align}
	\xres&=x(T)-v(T)^T\mathsf{Re}(\kappa)-w(T)^T\mathsf{Im}(\kappa)\label{eq:xresdef}\\
	a&=\mathsf{Im}(\kappa)+v(T)\\
	b&=-\mathsf{Re}(\kappa)+w(T)\\
	c&=w(T)-\cos(\Omega T)\mathsf{Re}(\kappa)+\sin(\Omega T)\mathsf{Im}(\kappa)\\
	d&=v(T)-\sin(\Omega T)\mathsf{Re}(\kappa)-\cos(\Omega T)\mathsf{Im}(\kappa)\ .
	\end{align}
	Here we inserted the exponential expression~\eqref{eq:tajexponentiated} for $e^{TAJ}$ and an analogous expression for $e^{B^{\textsf{res}}J}$. 	
	
	We next consider the output covariance matrix $\Sres(T) M\oplus (D_\beta\oplus D_\beta)\ \Sres(T)^T$ of the system and environment. Its principal~$2\times 2$ submatrix is given by~\eqref{eq:Mres}, with $\xres$ from~\eqref{eq:xresdef} and
	\begin{align}
	\yres =&
	\sum_j \frac{\lambda_j^2}{\omega_j^2}\coth \left(\frac{\beta\omega_j}{2}\right)
	\Big[ \big(\mathsf{Re}(f_L(\omega_jT))-\sin(\omega_jT)\big)^2 \\
	&\qquad\qquad+\big(\mathsf{Im}(f_L(\omega_jT))+1-\cos(\omega_jT)\big)^2\Big]
	\end{align}
	With the definition~\eqref{eq:yL} of $y_L(z)$ we compute
	\begin{align}
	\mathsf{Re}(f_L&(z))-\sin(z)=\\
	&2\sum_{m=1}^L (-1)^m \sin(z \Delta_m ) - \sin(z)=\mathsf{Im}(y_L(z))\\
	\mathsf{Im}(f_L&(z))-(\cos(z)-1)=\\
	&2\sum_{m=1}^L (-1)^m \cos(z\Delta_m ) +1- \cos(z)=\mathsf{Re}(y_L(z))
	\end{align}
	In summary, we obtain the expression~\eqref{eq:Mres} with 	$\xres$ and $\yres$ as in~\eqref{eq:xres} and~\eqref{eq:yres}, respectively, as claimed. 
\end{proof}
The quantity $\yres$ introduced in Theorem~\ref{thm:temperatureMres} fully captures the error of our decoupling scheme.  Introducing the noise spectrum
\begin{equation}
S_\beta(\omega)=\sum_j \lambda_j^2 \delta(\omega -\omega_j) \coth \left(\frac{\beta\omega}{2}\right)
\end{equation} we can reexpress this quantity as 
\begin{equation}\label{eq:yresnoisespectrum}
\yres=\int_0^\infty \frac{S_\beta(\omega)}{\omega^2} \big|y_L(\omega T) \big|^2 d\omega \ .
\end{equation} 
The chosen decoupling pulse sequence, i.e., the pulse application times $t_j=T\Delta_j$, enter this expression only through the definition of~$y_L$. We note that the scalar in~\eqref{eq:yresnoisespectrum} is identical to the expression $\chi(T)$ which appears in the analysis of $\pi$-pulse DD of the spin-boson model for a single qubit~\cite{pasiniuhrigpra10,uhrig2010efficient}: It was shown that~$\chi(T)$ characterizes the efficiency of $\pi$-pulse DD and in particular its dependence on the bath temperature or the high-frequency cutoff in the noise spectrum~$S_\beta(\omega)$. For hard high-frequency cutoffs the Uhrig times are optimal. In~\cite{uhrig2010efficient}, Uhrig and Pasini numerically found that for a soft high-frequency cutoff the optimal DD times are not those of UDD but close to those of periodic DD. 
In summary, these qubit DD results on the high-frequency cutoffs in the noise spectrum translate immediately to the bosonic setting considered in Theorem~\ref{thm:temperatureMres}.

\section{Parametrization of the symplectic group and its Lie algebra: Proof of Lemma~\ref{lem:parametrizesymplecticgroupalgebra}}\label{app:proofsymplecticparametr}
Here we provide a proof for Lemma~\ref{lem:parametrizesymplecticgroupalgebra}, i.e., we show that the matrices~$S_\alpha$ satisfy the properties \eqref{it:symplecticalgebra}--\eqref{it:adjointaction}, and we show that the homogenization pulses from the substitution rule~\eqref{eq:replacements} are passive Gaussian unitaries associated to elements of~$\{S_\beta|\ \beta\in\Gamma\}$.

\begin{proof}
\eqref{it:symplecticalgebra}
Define $\delta(\alpha)=|\{j\in \{0,\ldots,m\}\ |\ a_j=(1,1)\}| $ 
and let~$\Gamma$ be the set of sequences
$\alpha=(a_0,\ldots,a_m)\in (\mathbb{Z}_2^2)^{m+1}$
such that $\delta(\alpha)+\delta_{a_0\in \{(0,1),(1,0)\}}$ is odd.

First consider the equation~$X^TJ_{2^m}+J_{2^m}X=0$ satisfied by any element~$X$ of~$\mathfrak{sp}(2\cdot 2^m)$. For simplicity let us omit the index~$2^m$ and write~$J$ instead of~$J_{2^m}$.
Using the fact that~$x^T=x$, $y^T=-y$ and~$z^T=z$, as well as the commutation/anticommutation relations
\begin{align}\label{eq:comrelxyz}
	[J,y]&=[J,I]=0\\
	\{J,x\}&=\{J,z\}=0\ ,
\end{align}
we find that
\begin{align}
	S_\alpha^TJ+J S_\alpha&=(-1)^{\delta(\alpha)} S_\alpha J+JS_\alpha\\
	&=\left((-1)^{\delta(\alpha)+\delta_{a_0\in \{(0,1),(1,0)\}}}+1\right)JS_\alpha 
\end{align}
for any sequence $\alpha=(a_0,\ldots,a_m)\in (\mathbb{Z}_2^2)^{m+1}$. Hence by definition of~$\Gamma$, the matrix~$S_\alpha$ is in~$\mathfrak{sp}(2\cdot 2^m)$ for every~$\alpha\in\Gamma$. Second, we note that all~$S_\alpha$ are linearly independent by definition. To show that they form a basis of~$\mathfrak{sp}(2\cdot 2^m)$ we compute the number of elements in~$\Gamma$ and compare it to the dimension $n_S(2n_S+1)=2\cdot2^{2m}+2^m$ of the Lie algebra~$\mathfrak{sp}(2\cdot 2^m)$. Any element~$\alpha=(a_0,a_1,\ldots,a_m)\in\Gamma$ satisfies 
	\begin{equation}
	\delta \big((a_1,\ldots,a_m)\big)\begin{cases}
	\textrm{ is even} & \textrm{if } a_0\in \{(0,1),(1,0),(1,1)\} \\
	\textrm{ is odd} & \textrm{if } a_0=(0,0)
	\end{cases} .\label{eq:deltaGamma}
	\end{equation} Let us therefore consider the number of vectors~$(a_1,\ldots,a_m)\in\mathbb{Z}_2^{2m}$ such that $\delta \big((a_1,\ldots,a_m)\big) $ is even or odd, which we denote by~$e_m$ or~$o_m$, respectively. These are recursively defined as~$e_{m+1}=e_1 e_m+ o_1o_m$ and $o_{m+1}=e_1o_m+o_1e_m$ where~$e_1=3$ and~$o_1=1$. An induction on~$m$ shows that $e_{m}=2^{{2m-1}}+2^{m-1}$.
	Using~Eq.~\eqref{eq:deltaGamma} the number of elements in~$\Gamma$ is given by $|\Gamma|=3e_m+o_m=e_{m+1}$. Hence it is equal to the dimension~$2\cdot2^{2m}+2^m$ of the symplectic algebra~$\mathfrak{sp}(2\cdot 2^m)$. In summary,~$\{S_\alpha\}_{\alpha\in \Gamma}$ is a basis of~$\mathfrak{sp}(2\cdot 2^m)$.
	
\eqref{it:symplecticgroup}
	By orthogonality of the matrices~$I,x,y,z$, for all~$\alpha\in (\mathbb{Z}_2^2)^{m+1}$ the matrices~$S_\alpha$ are also orthogonal. Any element~$S$ of the symplectic group has to satisfy the equation~$S^TJS=J$. Using the orthogonality of~$S_\alpha$ and the commmutation/anticommutation relations from Eq.~\eqref{eq:comrelxyz} we find that
	\begin{align}
		S_\alpha^T J S_\alpha&=(-1)^{\delta_{a_0\in \{(0,1),(1,0)\}}} S_\alpha^TS_\alpha J\\
		&=(-1)^{\delta_{a_0\in \{(0,1),(1,0)\}}}J \ .
	\end{align} This is equal to $J$ if and only if  $\delta_{a_0\in \{(0,1),(1,0)\}}=0$ or equivalently $a_0\in \{(0,0),(1,1)\}$. Hence the matrix~$\{S_\alpha\}$ is symplectic orthogonal for~$\alpha\in\tilde{\Gamma}$.
	
\eqref{it:symplecticformasssmatrix} With the definition of $y$, we write $y_0$ as
    \begin{equation}
        y_0=y\otimes  I_{2^{m}}=\begin{pmatrix}
            0& -1\\
            1& 0
        \end{pmatrix}\otimes I_{2^{m}}
    \end{equation} and the matrix $J_{2^m}$ defining the symplectic form as
    \begin{equation}
        J_{2^m}=\begin{pmatrix}
            0_{2^m}& I_{2^{m}}\\
            -I_{2^{m}}& 0_{2^{m}}
        \end{pmatrix}=-y \otimes  I_{2^{m}}
    \end{equation} where $0_{2^{m}}$ and $I_{2^{m}}$ denote the ${2^{m}}\times {2^{m}}$ zero and identity matrices, respectively. Since $y_0=S_{(1,1,0,\ldots,0)}$ the claim follows.

\eqref{it:adjointaction}
The commutation relations
\begin{align}\label{eq:commrelxyz}
\begin{matrix}
&x^Tyx=-y \qquad y^Txy=-x \qquad z^Txz=-x&\\
&x^Tzx=-z\qquad \,y^Tzy=-z \qquad  z^Tyz=-y&
\end{matrix}
\end{align} between the matrices~$x$, $y$, and~$z$
imply the desired relation~\eqref{eq:adjointactionsymplectic} for the adjoint action.
\end{proof}

Let us now consider the unitaries obtained by the substitution rule~\eqref{eq:replacements}. They are products of the unitaries~$U_{y_0}$, $U_{x_i}$, $U_{y_i}$, and~$U_{z_i}$ (where~$i=1,\ldots,m$) defined in~\eqref{eq:uy0def} and~\eqref{eq:uxjuzjdef}. The associated symplectic matrices are
\begin{align}
    y_0&=y\otimes I^{\otimes m}\\
    x_i&=I\otimes I^{\otimes i-1}\otimes x\otimes I^{\otimes m-i}\\
    y_i&=I\otimes I^{\otimes i-1}\otimes y\otimes I^{\otimes m-i}\label{eq:xjzjdefinition}\\
    z_i&=I\otimes I^{\otimes i-1}\otimes z\otimes I^{\otimes m-i}. 
\end{align} respectively. 
The matrix $x_i$ acts as a product of~$2^{m-1}$ SWAP operations between pairs of modes, $y_0$ (respectively $z_i$) as a tensor product of $2^m$ (respectively $2^{m-1}$) identical single-mode orthogonal symplectic operations. We remark that~$y_0$ and~$y_i$ (for $i=1,\ldots,m$) are different in nature: Whereas the former is a tensor product of single-mode operations, the latter being equal to $x_iz_i$ acts as a product of two-mode SWAP gates  and single-mode gates.
Products of these matrices can be written as~$S_\alpha$ where $\alpha=(a_0,\ldots,a_m)$ and~$a_0\in\{(0,0),(1,1)\}$, i.e.,~they are elements of~$\{S_\beta|\ \beta\in\tilde{\Gamma}\}$ and hence orthogonal symplectic. Because such matrices are associated with passive Gaussian unitaries, the homogenization pulses chosen according to the substitution rule~\eqref{eq:replacements} are passive Gaussian unitaries.

\section{Proof of homogenization}\label{app:proofhomogen}

In this section we provide a rigorous proof of the claim that any universal~$N$-th order DD scheme for~$m+1$ qubits that uses Pauli pulses induces an~$N$-th order bosonic homogenization scheme for~$2^m$ modes via the replacement rules~\eqref{eq:replacements}. More precisely, we
first show that the deduced bosonic pulse sequence falls into the framework of Section~\ref{app:generaldecoupling}. Second we show that the $N$-th order decoupling property~\eqref{eq:conditionsqubitdec} of the functions $F^\textsf{qubit}_\alpha$ translates to the homogenization property~\eqref{eq:conditionshom1} of the induced functions~$F^\mathsf{bos}_\alpha$. In last paragraph, we apply this result to the qubit NUDD scheme to obtain the bosonic homogenization sequence from Theorem~\eqref{thm:homogenization}.

Consider an $(m+1)$-qubit DD scheme with Pauli pulses~$U_j=\sigma_{\beta(j)}$ applied at times~$t_j$ for $j=1,\ldots,L$. On the level of symplectic group elements, the substitution rule~\eqref{eq:replacements} can be formulated as
\begin{equation}\label{substitutionrule_app}
    \sigma_\beta\mapsto S_{\beta'} \quad \textrm{for all } \beta=(b_0,b_1,\ldots,b_m)\in (\mathbb{Z}_2^2)^{m+1}
\end{equation}
 where the index $\beta'\in \tilde{\Gamma}$ is related to $\beta$ via
\begin{align}
    &\beta'=(b_0',b_1,\ldots,b_m) \  \textrm{ and }\label{eq:beta'beta}\\
    & b_0'=\begin{cases} 
            (1,1) & \textrm{ if } b_0\in \{(1,0),(1,1)\}\label{eq:beta'beta2}\\
            (0,0) & \textrm{ if } b_0\in \{(0,0),(0,1)\} 
        \end{cases}
\end{align}
Hence, applying~\eqref{eq:replacements} to the above $(m+1)$-qubit DD scheme results in a bosonic pulse sequence consisting in pulses~$S_{\beta'(j)}$ applied at times~$t_j$. We note that due to the replacement~$(\sigma_z)_0\mapsto S_{(0,\ldots,0)}=I_{2^m}$, the number of required homogenization pulses may be reduced compared to the original~$L$.

For completeness, let us briefly show that the deduced bosonic homogenization scheme falls into the framework from Section~\ref{app:generaldecoupling}:
Since the system is assumed to be decoupled from the environment, $\Xorig(t)$ is of direct-sum form~$\Xorig(t)=\Xorig_S(t)\oplus \Xorig_E(t)$ where the system part~$\Xorig_S(t)$ (the one to be homogenized) can be written as~\eqref{eq:systembathcoupling} for a one-dimensional environment and the operators $B_\alpha$ satisfy~\eqref{eq:analyticexpansionmodeops} by assumption. The adjoint action of symplectic group elements $S_\beta$ for $\beta\in\tilde{\Gamma}$ is specified by Eq.~\eqref{eq:adxinvidentity} as shown in the the previous section.
Furthermore, these group elements are infinitesimally generated up to signs as the matrices $y_0,x_i,y_i,z_i$ from Eqs.~\ref{eq:xjzjdefinition} can be expressed up to overall signs as exponentials of elements in~$\mathfrak{sp}(2\cdot 2^m)$: explicitly, we have 
\begin{align}
\exp\left(\frac{\pi}{2} y_0\right)=-y_0 \ ,\quad
&\exp\left(\frac{\pi}{2} y_0(x_i+I^{\otimes m+1})\right)=- x_i\ ,\\
\exp\left(\frac{\pi}{2} y_i\right)=- y_i \ ,\quad
&\exp\left(\frac{\pi}{2} y_0(z_i+I^{\otimes m+1})\right)=- z_i\ ,
\end{align}
for $i=1,\ldots,m$. The arguments of the exponential functions are indeed elements of $\mathfrak{sp}(2\cdot 2^m)$ as can be seen from the definitions of these matrices~\eqref{eq:xjzjdefinition} and property~\eqref{it:symplecticalgebra} of $\mathfrak{sp}(2\cdot 2^m)$.
Additionally the product of all pulses applied up to time $T$ amounts to the identity operator, again up to signs as this property is inherited from the corresponding $(m+1)$-qubit DD scheme: if in the multi-qubit setting~$\prod_{j=1}^L \sigma_{\beta(j)}=e^{i\varphi}I_{2^{m+1}}$ then the substitution rule~\eqref{substitutionrule_app} yields $\prod_{j=1}^L S_{\beta(j)}=\pm I_{2^{m+1}}$ in the bosonic setting which implies that $\Stf_S(T)=\pm \Sres_S(T)$. Thereby, all assumptions of the decoupling framework in Section~\ref{app:generaldecoupling} are satisfied. Hence, we can conclude that the toggling frame generator takes the form as derived in the main article in Eq.~\eqref{eq:togglinggenerator}.

The following lemma shows that if the original qubit DD scheme achieves $N$-th order decoupling, then the substitution rule~\eqref{eq:replacements} transform it into an $N$-th order homogenization scheme.
\begin{figure*}[t]
	\includegraphics[width=\textwidth]{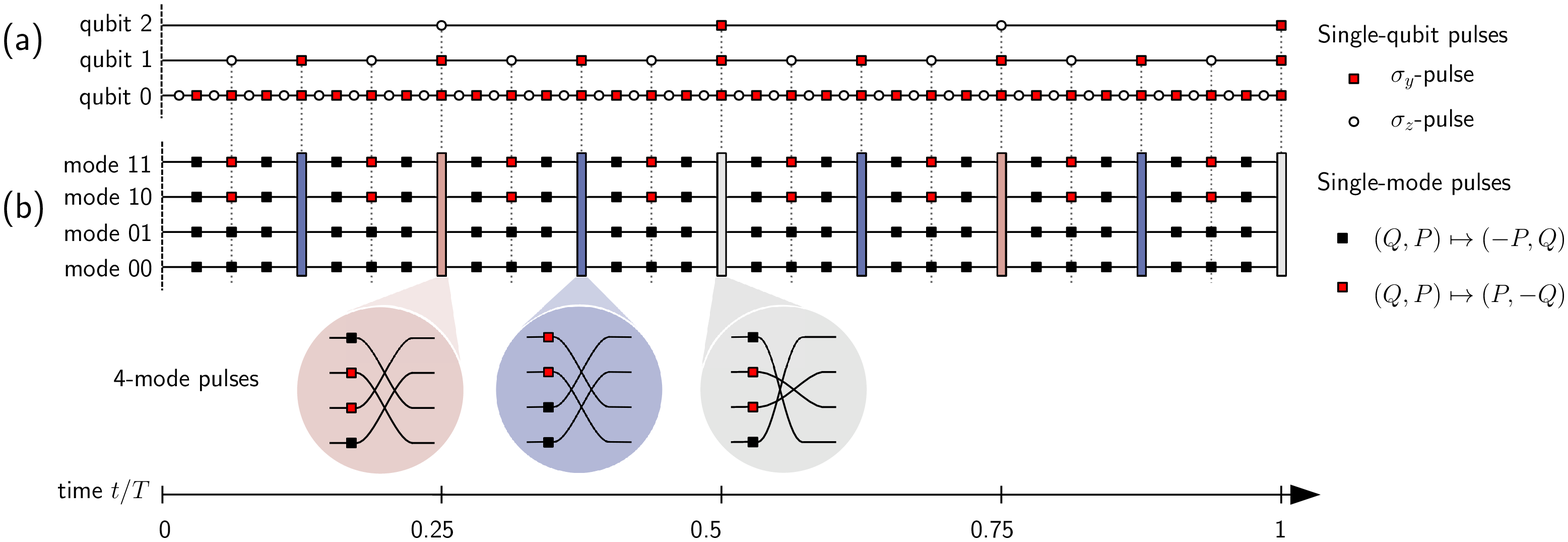}
	\caption{Bosonic homogenization scheme~(B) of suppression order~$N=1$ for four modes and the corresponding first order NUDD scheme~(A). The evolution under the decoherence Hamiltonian (horizontal straight lines) is interleaved with control pulses (circles and boxes); different four-mode gates (represented as red, blue and grey boxes) act as products of single-mode gates and SWAP gates between pairs of modes as represented by the insets below the figure.
		\label{fig:bosonic_hom_4modesorder1}}
\end{figure*}

\begin{lemma}\label{lem:boshomqubitdec}
    For a family of times $\{t_j\}_{j=1}^L$ and a function $\beta:\{0,\ldots,L\}\to (\mathbb{Z}_2^2)^{m+1}$ let~$\{F^\mathsf{qubit}_\alpha\}_{\alpha\in(\mathbb{Z}_2^2)^{m+1}}$ be defined as
    \begin{equation}\label{eq:qubitFalpha}
        F^\mathsf{qubit}_\alpha(t/T)=(-1)^{\sum_{j: t_j\le t} \langle \alpha, \beta(j)\rangle} \quad \textrm{for } t\in [0,T]\ .
    \end{equation} Assume there is a number~$N\in\mathbb{N}$ such that for all~$s\in \mathbb{N}$, $r_1,\ldots,r_s\in \mathbb{N}_0$, and $\alpha_1, \ldots, \alpha_s\in (\mathbb{Z}^2_2)^{m+1}$ these functions satisfy the decoupling condition~\eqref{eq:conditionsqubitdec}. 
    Define
    \begin{equation}\label{eq:homFalpha}
        F^\mathsf{bos}_\alpha(t/T)=(-1)^{\sum_{j: t_j\le t} \langle \alpha, \beta'(j)\rangle} \quad \textrm{for } t\in [0,T]\ .
    \end{equation} for all~$\alpha\in \Gamma$ where~$\beta'(j)$ is related to~$\beta(j)$ as specified in~\eqref{eq:beta'beta} and~\eqref{eq:beta'beta2}.
    Then the homogenization condition~\eqref{eq:conditionshom1} holds for~$F^\mathsf{bos}_\alpha$.
\end{lemma}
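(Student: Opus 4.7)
\emph{Proof plan.} The strategy is to reduce the bosonic claim to the hypothesis~\eqref{eq:conditionsqubitdec} by rewriting each $F^{\mathsf{bos}}_\alpha$ as a single qubit function $F^{\mathsf{qubit}}$ evaluated at a shifted multi-index.

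First, I would translate the exclusion set $\{\pm I_S,\pm J_{2^m}\}$ appearing in~\eqref{eq:conditionshom1} into combinatorial data on $(\mathbb{Z}_2^2)^{m+1}$. By Lemma~\ref{lem:parametrizesymplecticgroupalgebra}~\eqref{it:symplecticformasssmatrix}, $-J_{2^m}=S_{\gamma_0}$ for $\gamma_0:=((1,1),0,\ldots,0)$. Because the matrices $I,x,y,z$ satisfy Pauli-like multiplication rules up to signs, $\prod_{k=1}^s S_{\alpha_k}=\pm S_{\bigoplus_k\alpha_k}$, so the condition $\prod_k S_{\alpha_k}\notin\{\pm I_S,\pm J_{2^m}\}$ becomes the purely combinatorial statement $\bigoplus_k\alpha_k\notin\{0,\gamma_0\}$.

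The crux of the argument is the second step: relating $F^{\mathsf{bos}}_\alpha$ to $F^{\mathsf{qubit}}_\alpha$. A short case check on $b_0\in\mathbb{Z}_2^2$ using \eqref{eq:beta'beta}--\eqref{eq:beta'beta2} shows that $b_0'(j)\oplus b_0(j)$ equals $(0,1)$ precisely when $b_0(j)\in\{(0,1),(1,0)\}$ and is $(0,0)$ otherwise; a one-line computation also gives $\langle\gamma_0,\beta(j)\rangle=b_0^{(1)}(j)+b_0^{(2)}(j)\pmod 2$, which is the indicator of the very same event. Combined with $\mathbb{Z}_2$-bilinearity of the symplectic pairing, these two facts yield
\begin{equation}
\langle\alpha,\beta'(j)\rangle=\langle\alpha,\beta(j)\rangle+a_0^{(1)}\langle\gamma_0,\beta(j)\rangle\pmod 2
\end{equation}
for every $\alpha=((a_0^{(1)},a_0^{(2)}),a_1,\ldots,a_m)\in(\mathbb{Z}_2^2)^{m+1}$. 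Summing over $j$ with $t_j\le t$, exponentiating, and using that $\alpha\mapsto F^{\mathsf{qubit}}_\alpha(\tau)$ is multiplicative (again by bilinearity), I obtain the clean identity
\begin{equation}
F^{\mathsf{bos}}_\alpha(\tau)=F^{\mathsf{qubit}}_{\alpha\,\oplus\, a_0^{(1)}\gamma_0}(\tau)\ .
\end{equation}

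The third step is bookkeeping. Writing $\alpha_k=((a_{k,0}^{(1)},a_{k,0}^{(2)}),a_{k,1},\ldots,a_{k,m})$ and setting $\tilde\alpha_k:=\alpha_k\oplus a_{k,0}^{(1)}\gamma_0$, the previous identity substituted into~\eqref{eq:togglingintegrals} gives $\cF^{\vec r}_{\vec\alpha}(\{F^{\mathsf{bos}}_\alpha\})=\cF^{\vec r}_{\vec{\tilde\alpha}}(\{F^{\mathsf{qubit}}_\alpha\})$, and the hypothesis~\eqref{eq:conditionsqubitdec} forces this to vanish whenever $s+\sum_k r_k\le N$ and $\bigoplus_k\tilde\alpha_k\neq 0$. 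Since $\bigoplus_k\tilde\alpha_k=\bigoplus_k\alpha_k\oplus c\,\gamma_0$ with $c:=\sum_k a_{k,0}^{(1)}\bmod 2\in\{0,1\}$, the equality $\bigoplus_k\tilde\alpha_k=0$ forces $\bigoplus_k\alpha_k\in\{0,\gamma_0\}$; contrapositively, the exclusion obtained in the first step implies $\bigoplus_k\tilde\alpha_k\neq 0$ and hence the integral vanishes, as required by~\eqref{eq:conditionshom1}. I expect the main obstacle to be spotting the identification of $\gamma_0$ that collapses the modification $\beta\mapsto\beta'$ into a single additive twist; once that is recognized, the rest is routine $\mathbb{Z}_2$ bookkeeping.
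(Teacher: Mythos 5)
Your proposal is correct and follows essentially the same route as the paper's proof: both reduce the bosonic condition to the qubit one by shifting the modification $\beta\mapsto\beta'$ onto the first index, your twist $\alpha\mapsto\alpha\oplus a_0^{(1)}\gamma_0$ being exactly the paper's $\alpha'=(c,c,0,\ldots,0)\oplus\alpha$, and both then use $\bigoplus_k\tilde\alpha_k=\bigoplus_k\alpha_k\oplus c\,\gamma_0$ together with the identification of $\{\pm I,\pm J_{2^m}\}$ with $\{0,\gamma_0\}$ to conclude. No gaps.
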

\begin{proof}
    Let~$\alpha\in\Gamma\subset (\mathbb{Z}_2^2)^{m+1}$. We use the notation~$a_0=(c,d)\in\mathbb{Z}^2_2$ for the zero component of~$\alpha$ in this proof. Observe that the functions~\eqref{eq:qubitFalpha} and~\eqref{eq:homFalpha}
    only differ in the expressions~$\langle \alpha,\beta'(j)\rangle$ and $\langle \alpha,\beta(j)\rangle$. Due to~\eqref{eq:beta'beta2} a summand of the former satisfies (modulo 2)
    \begin{align}\label{eq:a_0b_0'}
        a_0\begin{pmatrix}0&1\\-1&0\end{pmatrix} b_0'(j)^T= a'_0\begin{pmatrix}0&1\\-1&0\end{pmatrix} b_0(j)^T
    \end{align} where we defined $a'_0=(0,d\oplus c)$. Then the whole expression can be written as
    \begin{align}
        \langle \alpha, \beta'(j)\rangle& = \left[ \sum_{k=0}^m a_k \begin{pmatrix}0&1\\-1&0\end{pmatrix}b'_k(j)^T \right]\mod 2\\
        &=\langle \alpha', \beta(j)\rangle
    \end{align}
    where we defined $\alpha'=(a_0',a_1,\ldots,a_m)$. In conclusion, we find
    \begin{equation}
        F^\textsf{bos}_\alpha(t)=F^\textsf{qubit}_{\alpha'}(t)\quad \textrm{ for all } t\in[0,T]\ . 
    \end{equation} where $\alpha'=(c,c,0,\ldots,0)\oplus\alpha$.
    We note that the matrices~$S_\alpha$ satisfy~$\prod_{k=1}^sS_{\alpha_k}=\pm S_{\bigoplus_{k=1}^s \alpha_k}$ which can be easily seen from the commutation relations between~$x$, $y$, and~$z$. Additionally, the identity matrix is equal to~$S_{(0,\ldots,0)}$ and the matrix~$J_{2^m}$ defining the symplectic form is given by~$-S_{(1,1,0,\ldots,0)}$ (cf.~\eqref{it:symplecticformasssmatrix}). Therefore the condition
    \begin{align}
        &\prod_{k=1}^sS_{\alpha_k}\notin\{\pm I_{2\cdot 2^m},\pm J_{2^m}\}\ \textrm{ is equivalent to }\\
        &\bigoplus_{k=1}^s \alpha_k \notin\{ (1,1,0,\ldots,0),(0,\ldots,0)\}\ .
    \end{align}
    
    Now let~$s\in\mathbb{N}$, $r_1,\ldots,r_s\in\mathbb{N}_0$ and $\alpha_1,\ldots,\alpha_s\in \Gamma$ be such that $s+\sum_{k=1}^sr_k\le N$ and
    \begin{equation}\label{eq:sumalpha}
        \bigoplus_{k=1}^s \alpha_k \notin\{ (1,1,0,\ldots,0),(0,\ldots,0)\} \ .
    \end{equation} For $k=1,\ldots,s$ we introduce the notation $\alpha_k=(a_0(k),a_1(k)\ldots,a_m(k))$ and $a_0(k)=(c(k),d(k))\in \mathbb{Z}_2^2$. 
    Due to the definition of $\alpha'$ we have
    \begin{equation}\label{eq:sumalpha'}
        \bigoplus_{k=1}^s \alpha_k=\bigoplus_{k=1}^s \alpha'_k \oplus \big( \kappa, \kappa,0,\ldots,0\big)
    \end{equation} where we defined~$\kappa=\bigoplus_{k=1}^s c(k)$. In both cases~$\kappa=0$ and~$\kappa=1$,
    combining~\eqref{eq:sumalpha} and~\eqref{eq:sumalpha'} gives that $\bigoplus_{k=1}^s \alpha'_k\neq (0,\ldots,0)$. Then property~\eqref{eq:conditionsqubitdec} of the qubit DD functions~$F^\textsf{qubit}_{\alpha'}$ implies that 
    \begin{equation}
        \cF^{\vec{r}}_{\vec{\alpha'}} (\{F^\textsf{qubit}_{\alpha'}\})=0
    \end{equation}
    But due to the equality between~$F^\mathsf{qubit}_\alpha$ and~$F^\mathsf{bos}_\alpha$, then also $\cF^{\vec{r}}_{\vec{\alpha}} (\{F^\textsf{bos}_{\alpha}\})=0$. In conclusion the functions~$F^\mathsf{bos}_\alpha$ satisfy~\eqref{eq:conditionshom1}.
\end{proof}

What still remains to be proven is that a symplectic evolution of the form 
\begin{equation}\label{eq:sresdecoupledhomc1c2}
\Sres(T)=(c_1 I_{2n_S}+c_2 J_{n_S})\oplus S_E+O(T^{N+1})
\end{equation} for $c_1,c_2\in\mathbb{R}$ and $S_E\in \Sp(2n_E)$ describes a decoupled and homogenized evolution, i.e., that it can be written as $\Sres(T)=e^{\omega T J_{n_S}}\oplus S_E+O(T^{N+1})$ for some $\omega\in\mathbb{R}$. 
Indeed, the fact that~$\Sres(T)$ is symplectic and~\eqref{eq:sresdecoupledhomc1c2} imply that 
$	(c_1^2+c_2^2) J_{n_S}=J_{n_S}+O(T^{N+1})$ that is,  $c_1^2+c_2^2=1+\epsilon$ where $\epsilon=O(T^{N+1})$. 
Let $\omega\in\mathbb{R}$ be such that  $\cos \omega T=c_1/\sqrt{1+\epsilon}$ and $\sin\omega T=c_2/\sqrt{1+\epsilon}$. Then we have, using  $e^{\omega T J_{2^m}}=\cos(\omega T)I_{S}+\sin(\omega T)J_{2^m}$,
\begin{align}
\big\|(c_1I_{2n_S}+c_2 &J_{n_S})-e^{\omega T J_{n_S}}\big\| \\
& =\|(1-\frac{1}{\sqrt{1+\epsilon}}) (c_1 I_{2n_S}+c_2 J_{n_S})\|\\
&=O\left(|1-\frac{1}{\sqrt{1+\epsilon}}|\cdot (|c_1|+|c_2|)\right)\\ 
&\leq O\left(| 1-\frac{1}{\sqrt{1+\epsilon}}|\cdot \sqrt{ c_1^2+c_2^2}\right)\\
&= O\left(|1-\frac{1}{\sqrt{1+\epsilon}}|\cdot \sqrt{1+\epsilon}\right)=O(\epsilon)\ 
\end{align}
by the Cauchy-Schwarz-inequality. The claim then follows from the triangle inequality.

{\em Bosonic homogenization from the NUDD sequence.}
To derive Theorem~\ref{thm:homogenization}, it suffices to apply Lemma~\ref{lem:boshomqubitdec} to NUDD for $m+1$ qubits. We use the convention that qubit~$0$ is associated with the lowest concatenation level. This choice guarantees that the  substitution rule~\eqref{eq:replacements} achieves a maximal reduction of pulses, resulting in a $2^m$-mode bosonic homogenization scheme using $(N+1)^{2m+1}$~Gaussian unitaries.  An example of a bosonic homogenization scheme constructed from the NUDD scheme in this way is shown in Fig.~\ref{fig:bosonic_hom_4modesorder1}.

We remark that if a priori knowledge about the uncontrolled Hamiltonian is available, one can selectively  suppress stronger interactions. This can be realized, e.g., by using different suppression orders at various nesting levels in the recursive NUDD construction, see~\cite{wangliu11}.
The same reasoning can be extended to the bosonic setting. However, if decoupling is applied in conjunction with homogenization, prior information about the original uncontrolled Hamiltonian needs to be converted into prior information about the effective decoupled evolution. This appears to require a non-trivial analysis.

\section{Quadratic Hamiltonians with linear terms}\label{app:linearterms}
In this section, we discuss the effect of our schemes, both decoupling and homogenization, on system-environment interactions of the form
\begin{align}
\Horig(t)&=\frac{1}{2}\sum_{j,k=1}^{2(n_S+n_E)}A_{j,k}(t)R_j R_k\nonumber\\
&\qquad +\sum_{j=1}^{2(n_S+n_E)}b_{j}(t)R_j\label{eq:quadraticlinear}
\end{align} which are quadratic in the mode operators $R=(Q_1^S,\ldots,Q_{n_S}^S,P_1^S,\ldots,P_{n_S}^S,Q_1^E,\ldots,Q_{n_E}^E,P_1^E,\ldots,P_{n_E}^E)$ of system and environment,
but may include additional linear terms as parametrized by a time-dependent vector $b(t)\in\mathbb{R}^{2(n_S+n_E)}$ compared to Eq.~\eqref{eq:Hamiltonian}. We note that the evolution generated by~\eqref{eq:quadraticlinear}, together with our Gaussian control pulses, leads to a Gaussian unitary operation~$\Ures(T)$ which is characterized by its action on covariance matrices and displacement vectors of a given state~$\rho$. The latter are defined as 
\begin{align}
M_{j,k}(\rho)&=\tr(\{R_j-d_j,R_k-d_k\}\rho)\label{eq:covariancematrixdefmjkrho}\\
d_j(\rho)&=\tr(R_j\rho)
\end{align}
for $j,k\in \{1,\ldots,2(n_S+n_E)\}$.  The resulting unitary~$\Ures(T)$ acts as
\begin{align}
M\left(\rhores(T)\right)&=\Sres(T)M(\rho)\Sres(T)^T\\
d(\rhores(T))&=\Sres(T)d(\rho)+\zetares(T)\ ,
\end{align}
where $\rhores(T)=\Ures(T)\rho\Ures(T)^*$ is the time-evolved state. In this expression, $\Sres(T)\in \Sp(2(n_S+n_E))$ is symplectic, whereas $\zetares(T)\in\mathbb{R}^{2(n_S+n_E)}$ defines a displacement in phase space.

Here we argue that the constructed pulse sequence decouples respectively homogenizes the evolution of covariance matrices, i.e., the matrix~$\Sres(T)$ has the same structure as described in Theorem~\ref{thm:decoupling} respectively Theorem~\ref{thm:homogenization}.  In particular, the presence of linear terms in the Hamiltonian~\eqref{eq:quadraticlinear} has no effect on the evolution of second moments, and the corresponding evolution is decoupled from the environment respectively given by decoupled and homogenized oscillators up to the considered suppression order.  Indeed, this follows from the simple observation that 
the time evolution of the covariance matrix~$M(t):=M(\rho(t))$ for a state~$\rho(t)$ evolving under a time-dependent Hamiltonian as in~\eqref{eq:quadraticlinear} is governed by the differential equation
\begin{align}
\dot{M}(t)&= 2(-JA(t)M(t)+M(t)A(t)J)\ .
\end{align}
Importantly, this equation has no dependence on the vector~$b(t)$ associated with the linear terms. Furthermore, application of  unitary Gaussian  decoupling pulses does not change the form~\eqref{eq:quadraticlinear} of 
the time-dependent generator. This implies the claim.

\end{document}